\newtheorem{theorem}{Theorem}
\newtheorem{lemma}{Lemma}
\newtheorem{corollary}{Corollary}
\newtheorem{assumption}{Assumption}
\newtheorem{remark}{Remark}
\tikzstyle{block} = [draw,rectangle, rounded corners, minimum width=1cm, minimum height=0.8cm,text centered, line width=2pt ]
\tikzstyle{arrow} = [thick,->,>=stealth,line width=2pt]
\tikzset{cross/.style={cross out, draw=black, minimum size=2*(#1-\pgflinewidth), inner sep=0pt, outer sep=0pt},
%default radius will be 1pt. 
cross/.default={1pt}}
\tikzset{
  shift left/.style ={commutative diagrams/shift left={#1}},
  shift right/.style={commutative diagrams/shift right={#1}}
}
\newcommand\rsmraise[1]{%
  \ifx#1\displaystyle .8\else
    \ifx#1\textstyle .8\else
      \ifx#1\scriptstyle .6\else
        .45%
      \fi
    \fi
  \fi}
\tikzstyle{block} = [draw,rectangle, rounded corners, minimum width=1cm, minimum height=0.8cm,text centered, line width=2pt ]
\tikzstyle{arrow} = [thick,->,>=stealth,line width=2pt]
\tikzset{% define addarrow decoration
    addarrow/.style={decoration={markings, mark=at position 1 with {\arrow{stealth}}},
                     postaction={decorate}}
}
\newcommand{\R}{\mathbb{R}}			% real numbers
\DeclareMathOperator{\ee}{\mathbb{E}}			% expected value
\DeclareMathOperator{\tr}{\mathbf{tr}}			% trace
\title{
Learning-based Control of Unknown Linear Systems with Thompson Sampling
}
\author{Yi Ouyang, Mukul Gagrani and Rahul Jain% <-this % stops a space
\thanks{Preliminary version of this paper  has been invited to the Allerton 2017 conference.}
\thanks{Y. Ouyang is currently with the University of California, Berkeley. M. Gagrani, and R. Jain are with the Department of Electrical
Engineering, University of Southern California, Los Angeles, CA. 
Email: {\tt ouyangyi@berkeley.edu, (mgagrani, rahul.jain)@usc.edu}}
}
\begin{document}

\maketitle
%\thispagestyle{empty}
%\pagestyle{empty}

%%%%%%%%%%%%%%%%%%%%%%%%%%%%%%%%%%%%%%%%%%%%%%%%%%%%%%%%%%%%%%%%%%%%%%%%%%%%%%%%
\begin{abstract}
We propose a Thompson sampling-based learning algorithm for the Linear Quadratic (LQ) control problem with unknown system parameters. 
The algorithm is called Thompson sampling with dynamic episodes (TSDE) where two stopping criteria determine the lengths of the dynamic episodes in Thompson sampling. The first stopping criterion controls the growth rate of episode length.
The second stopping criterion is triggered when the determinant of the sample covariance matrix is less than half of the previous value. 
We show under some conditions on the prior distribution that the expected (Bayesian) regret of TSDE accumulated up to time $T$ is bounded by $\tilde O(\sqrt{T})$. Here $\tilde O(\cdot)$ hides constants and logarithmic factors. This is the first $\tilde O(\sqrt{T})$ bound on expected regret of learning in LQ control. By introducing a re-initialization schedule, we also show that the algorithm is robust to time-varying drift in model parameters. Numerical simulations are provided to illustrate the performance of TSDE.
\end{abstract}

%%%%%%%%%%%%%%%%%%%%%%%%%%%%%%%%%%%%%%%%%%%%%%%%%%%%%%%%%%%%%%%%%%%%%%%%%%%%%%%%

\section{Introduction}\label{sec:intro}
%\section{Introduction}\label{intro}

%1 First introduce the problem of LQ control ,certainity equivalence principle and papers to cite that it is not optimal, exploitation-exploration tradeoff at the end
%2 OFU principle and yadkori paper
%3 TS idea, yadkori paper , frequentist paper
%4 Our idea and summary 

That the model and its parameters are known precisely is a pervasive assumption. And yet, for real-world systems, this is hardly the case. Typically, a set is known in which the model parameters lie. Furthermore, for many problems, we do not have the luxury of first performing system identification, and then using that in designing controllers. Learning model parameters and the corresponding optimal controller must be performed simultaneously at the fastest possible non-asymptotic rate. Classical adaptive control \cite{astrom1994adaptive,sastry1989adaptive,narendra1989stable} mostly provides asymptotic guarantees for non-stochastic systems. Results on Stochastic Adaptive Control are rather sparse. But recent advances in Online Learning \cite{cesa2006prediction} opens the possibility of using Online Learning-based methods for finding the optimal controllers to unknown stochastic systems. 

In this paper, we consider a linear stochastic system with quadratic cost (an LQ system) with unknown parameters. If the true parameters are known, then the problem is the classic stochastic LQ control where optimal control is a linear function of the state.  In the learning problem, however, the true system dynamics are unknown. This problem is also known as the adaptive control problem \cite{goodwin2014adaptive,kumar2015stochastic}. 

The early works in the adaptive control literature made use of the \emph{certainty equivalence principle}. 
The idea is to estimate the parameters from collected data and apply the optimal control by taking the estimates to be the true parameters. 
It was shown that the certainty equivalence principle may lead to the convergence of the estimated parameters to incorrect values \cite{becker1985adaptive} and thus results in suboptimal performance. 
This issue arises fundamentally from the lack of exploration. The controller must explore the environment to learn the system dynamics but at the same time it also needs to exploit the information available to minimize the accumulated cost. This leads to the well known exploitation-exploration trade-off in learning problems.

One approach to actively explore the environment is to add perturbations to the controls (see, for examples, \cite{chen1986convergence,guo1996self}).
However, the persistence of perturbations lead to suboptimal performance except in the asymptotic region.
To overcome this issue, Campi and Kumar \cite{campi1998adaptive} proposed a \text{cost-biased maximum likelihood} algorithm and proved its asymptotic optimality. More recent works \cite{abbasi2011regret,ibrahimi2012efficient} show a connection between the \text{cost-biased maximum likelihood} and the 
\emph{optimism in the face of uncertainty} (OFU) principle \cite{lai1985asymptotically} in online learning. The OFU principle handles the exploitation-exploration trade-off by making use of optimistic parameters.
Based on the OFU principle, \cite{abbasi2011regret,ibrahimi2012efficient} design algorithms that achieve $\tilde O(\sqrt{T})$ bounds on regret accumulated up to time $T$ with high probability. Here $\tilde O(\cdot)$ hides the constants and logarithmic factors. 
This regret scaling is believed to be optimal except for logarithmic factors because the similar linear bandit problem possesses a lower bound of $O(\sqrt{T})$ \cite{dani2008stochastic}.

One drawback of the OFU-based algorithms is their computational requirements. 
Each step of an OFU-based algorithm requires optimistic parameters as the solution of an optimization problem. Solving the optimization is computationally expensive. In recent years, Thompson sampling (TS) has become a popular alternative to OFU due to its computational simplicity (see \cite{russo2017tutorial} for a recent tutorial). 
It has been successfully applied to multi-armed bandit problems \cite{scott2010modern,chapelle2011empirical,agrawal2012analysis,kaufmann2012thompson,agrawal2013thompson} as well as to Markov Decision Processes (MDPs) \cite{osband2013more,abbasi2015bayesian,gopalan2015thompson}.
The idea dates back to 1933 due to Thompson \cite{thompson1933likelihood}. TS-based algorithms generally proceed in episodes. At the beginning of each episode, parameters are randomly sampled from the posterior distribution maintained by the algorithm. Optimal control is applied according to the sampled parameters until the next episode begins. Without solving any optimization problem, TS-based algorithms are computationally more efficient than OFU-based algorithms.

The idea of TS has not been applied to learning in LQ control until very recently \cite{abbasi2015bayesian,abeille2017thompson}.
One key challenge to adapt TS to LQ control is to appropriately design the length of the episodes. 
Abbasi-Yadkori and Szepesvári \cite{abbasi2015bayesian} designed a dynamic episode schedule for TS-based on their OFU-based algorithm \cite{abbasi2011regret}. Their TS-based algorithm was claimed to have a $\tilde O(\sqrt{T})$ growth, but a mistake in the proof of their regret bound was pointed out by \cite{osband2016posterior}.
A modified dynamic episode schedule was proposed in \cite{abeille2017thompson}, but it suffers a $\tilde O(T^{\frac{2}{3}})$ regret that is worse than the target $\tilde O(\sqrt{T})$ scaling. A related recent paper is \cite{kim2017thompson} which proposes a TS-based learning algorithm for finite state and finite action space stochastic control problems that is asymptotically optimal. Our focus is on non-asymptotic performance of learning-based control algorithms for stochastic linear systems that of course have both uncountable state and action spaces which is much more challenging.

In this paper, we consider the LQ control problem under two scenarios: with stationary parameters and with time-varying parameters. In the case of stationary parameters, we propose a Thompson sampling with dynamic episodes ({\tt TSDE}) learning algorithm. In {\tt TSDE}, there are two stopping criteria for an episode to end. The first stopping criterion controls the growth rate of episode length.
The second stopping criterion is the doubling trick similar to the ones in \cite{abbasi2011regret,abbasi2015bayesian,abeille2017thompson} that stops when the determinant of the sample covariance matrix becomes less than half of the previous value. 
Instead of a high probability bound on regret as derived in \cite{abbasi2011regret,ibrahimi2012efficient,abeille2017thompson},
we choose the expected (Bayesian) regret as the performance metric for the learning algorithm.
The reason is because in LQ control, a high probability bound does not provide a desired performance guarantee as the system cost may go unbounded in the bad event with small probability.
Under some conditions on the prior distribution, we show that the expected regret of {\tt TSDE}   accumulated up to time $T$ is bounded by $\tilde O(\sqrt{T})$.
In view of the mistake in \cite{abbasi2015bayesian}, our result would be  the first $\tilde O(\sqrt{T})$ bound on expected regret of learning in LQ control. When the system parameters are time-varying, we extend {\tt TSDE}   to the Time-Varying Thompson Sampling with Dynamic Episodes ({\tt TSDE-TV}) learning algorithm. Under some condition on the expected number of parameter jumps, we prove that {\tt TSDE-TV}   achieves sub-linear regret in $T$ which implies its asymptotical optimality under the average cost criterion. The performance of {\tt TSDE}   and {\tt TSDE-TV}   is also verified through numerical simulations. 
%The performance of {\tt TSDE}   is also verified through numerical simulations.

\section{Problem Formulation}\label{sec:problem}

\subsection{Preliminaries: Stochastic Linear Quadratic Control}

Consider a linear system controlled by a controller. The system dynamics are given by
\begin{align}
&x_{t+1}  = A x_t + B  u_t + w_t, \label{Model:system}
\end{align}
where $x_t \in \R^n$ is the state of the system plant, $u_t \in \R^m$ is the control action by the controller, and $w_t$ is the system noise which has the standard Gaussian distribution $N(0,I)$. 
$A$ and $B$ are system matrices with proper dimensions.
The initial state $x_1$ is assumed to be zero.

The control action $u_t = \pi_t(h_t)$
at time $t$ is a function $\pi_t$ of the history of observations $h_t = (x_{1:t}, u_{1:t-1})$ including states $x_{1:t}:=(x_1,\cdots,x_t)$ and controls $u_{1:t-1}=(u_1,\cdots,u_{t-1})$. We call $\pi = (\pi_1,\pi_2,\dots)$ a (adaptive) control policy. The control policy allows the possibility of randomization over control actions.

The cost incurred at time $t$ is a quadratic instantaneous function
\begin{align}
c_t = x_t^\top Q x_t + u_t^\top R u_t
\end{align}
where $Q$ and $R$ are positive definite matrices.

Let $\theta^\top = [A,B]$ be the system parameter including both the system matrices. 
Then $\theta \in \R^{d\times n}$ where $d = n+m$ with compact support $\Omega_1$.
When $\theta$ is perfectly known to the controller, minimizing the infinite horizon average cost per stage is a standard stochastic Linear Quadratic (LQ) control problem. %We will denote the true parameter by $\theta_*$.
Let $J(\theta)$ be the optimal per stage cost under $\theta$. That is,
\begin{align}
J(\theta) = \min_{\pi}\limsup_{T\rightarrow \infty}\frac{1}{T}\sum_{t=1}^{T} \ee^{\pi} [c_t | \theta]
\label{eq:opt_cost}
\end{align}
It is well-known that the optimal cost is given by
\begin{align}
J(\theta) = \tr(S(\theta))
\end{align}
if the following Riccati equation has a unique positive definite solution $S(\theta)$.
\begin{align}
&S(\theta) = Q + A^\top S(\theta) A  \notag\\
 & -A^\top S(\theta) B(R + B^\top S(\theta) B)^{-1} B^\top S(\theta) A.
 \label{eq:MRE}
\end{align}
Furthermore, for any $\theta$ and any $x$, the optimal cost function $J(\theta)$ satisfies the Bellman equation
\begin{align}
& J(\theta) +  x^\top S(\theta) x
=  
\min_{u} \Big\{ 
x^\top Q x +  u^\top R u 
\notag\\
 &\quad + \ee \Big[ x^\top_{t+1}(u) S(\theta) x_{t+1}(u)|x,\theta \Big]
 \Big\}
 \label{eq:DP_inf}
\end{align}
where $x_{t+1}(u) = \theta^\top [x^\top, u^\top]^\top + w_t$, and the optimal control that minimizes \eqref{eq:DP_inf} is equal to
\begin{align}
u = G(\theta) x
\label{eq:known_control}
\end{align}
with the gain matrix $G(\theta)$ given by
\begin{align}
G(\theta) = - (R + B^\top S(\theta) B)^{-1}B^\top S(\theta) A.
\end{align}

The problem we are interested in is the case when the system matrices $A,B$ are unknown.
When $\theta^\top = [A,B]$ is unknown, the problem becomes a reinforcement learning problem where the controller needs to learn the system parameter while minimizing the cost.

We first consider a learning problem with stationary parameters, and then with time-varying parameters.

\subsection{Reinforcement Learning with Stationary Parameter}

Consider the linear system 
\begin{align}
&x_{t+1}  = A_1 x_t + B_1  u_t + w_t, \label{Model:learning_stationary}
\end{align}
where $A_1$ and $B_1$ are fixed but unknown system matrices.
Let $\theta_1^\top = [A_1,B_1]$ be the model parameter.
We adopt a Bayesian setting and assume that there is a prior distribution $\mu_1$ for $\theta_1$.
%That is, for any set $\Theta$
%\begin{align}
%\mathbb{P}(\theta_* \in \Theta) = \mu_1(\Theta)
%\end{align}

Since the actual parameter $\theta_1$ is unknown, we define the expected regret of a policy $\pi$ compared with the optimal cost $J(\theta_1)$ to be
\begin{align}
R(T,\pi) = \ee\Big[ \sum_{t=1}^T \Big[ c_t  - J(\theta_1) \Big] \Big].
 \label{eq:regret_stationary}
\end{align}
The above expectation is with respect to the randomness for $W_t$, the prior distribution $\mu_1$ for $\theta_1$, and the randomized algorithm.
The learning objective is to find a control algorithm that minimizes the expected regret.

\subsection{Reinforcement Learning with Time-Varying Parameter}
Consider the time-varying system 
\begin{align}
&x_{t+1}  = A_t x_t + B_t  u_t + w_t, \label{Model:learning_stationary}
\end{align}
with system matrices $A_t$ and $B_t$.
The model parameter $\theta_t^\top = [A_t,B_t]$ is time-varying and unknown to the controller.

We assume that the parameter $(\theta_t,t=1,2,\dots)$ is a jump process. When it jumps, the new parameter is generated from the  prior distribution $\mu_1$.
We use $j_t \in \{0,1\},t=1,2,\dots$ to indicate the jumps. Then 
$\theta_{t} = \theta_{t-1}$ if $j_t = 0$, and $\theta_{t}$ is generated (independently of the past) from $\mu_1$ if $j_t = 1$.
The jump process $(j_t, t=1,2,\dots)$ is assumed to be independent of the system noise.

Since $J(\theta_t)$ is the optimal cost under $\theta_t$, we define the expected regret of a policy $\pi$ to be
\begin{align}
R_{TV}(T,\pi) = \ee\Big[ \sum_{t=1}^T \Big[ c_t  - J(\theta_t) \Big] \Big].
\label{eq:regret_TV}
\end{align}
The above expectation is with respect to the randomness for $W_t$, the distribution for the jump process $(\theta_t,t=1,2,\dots)$, and the randomized algorithm.
The learning objective is to find a control algorithm that minimizes the expected regret.

\section{Thompson Sampling Based Control Policies}\label{sec:algos}

In this section, we develop Thompson Sampling (TS)-based control policies for the problems with stationary  and time-varying parameters.

\subsection{Thompson Sampling for Stationary Parameter}

For the reinforcement learning problem with stationary parameters, we make the following assumption on the prior distribution $\mu_1$.
\begin{assumption}
\label{assum:prior}
The prior distribution $\mu_1$ consists of independent Gaussian distributions projected on a compact support $\Omega_1 \subset \R^{d\times n}$ such that for any $\theta \in \Omega_1$, the Riccati equation \eqref{eq:MRE} with $[A,B] = \theta^\top$ has a unique positive definite solution.  

Specifically, there exist $\hat \theta_1(i) \in \R^d$ for $i=1,\dots,n$ and a positive definite matrix $\Sigma_1 \in \R^{d\times d}$ such that for any $\theta \in \R^{d\times n}$
\begin{align}
&\mu_{1} = \bar \mu_1|_{\Omega_1}, \quad \bar\mu_1(\theta) = \prod_{i=1}^n \bar\mu_1(\theta(i))
\\
& \bar\mu_1(\theta(i)) \equiv N(\hat \theta_t(i),\Sigma_1) \text{ for } i=1,\dots,n.
\label{eq:mu1}
\end{align}
Here $\theta(i)$ denotes $\theta$'s $i$th column ($\theta = [\theta(1),\dots,\theta(n)]$).
\end{assumption}

Note that under the prior distribution, the mean $\hat \theta_1(i)$ for each column of $\theta_1$ may be different, but they have the same covariance matrix $\Sigma_1$.

At each time $t$, given the history of observations $h_t = (x_{1:t},u_{1:t-1})$, we define $\mu_t$ to be the posterior belief of $\theta_1$ given by
\begin{align}
\mu_t(\Theta) = \mathbb{P}(\theta_1 \in \Theta | h_t).
\end{align}
The posterior belief can be computed according to the following lemma.
\begin{lemma}
\label{lm:bfupdates}
The posterior belief $\mu_t$ on the parameter $\theta_1$ satisfies
\begin{align}
&\mu_{t} = \bar \mu_t|_{\Omega_1}
, \quad \bar\mu_t(\theta) = \prod_{i=1}^n \bar\mu_t(\theta(i))
\\
& \bar\mu_1(\theta(i)) \equiv N(\hat \theta_t(i),\Sigma_t)
\label{eq:mut}
\end{align}
where $\hat \theta_t(i), i=1,\dots,n,$ and $\Sigma_t$ can be sequentially updated using observations as follows.
\begin{align}
&\hat \theta_{t+1}(i) 
= \hat \theta_t(i) + \frac{\Sigma_t z_t (x_{t+1}(i) - \hat \theta_t(i)^\top z_t )}{1+z_t^\top\Sigma_tz_t}
\label{eq:theta_hat_update}
\\
&\Sigma_{t+1} = \Sigma_t -  \frac{\Sigma_t z_tz_t^\top \Sigma_t }{1+z_t^\top\Sigma_t  z_t} 
\label{eq:sigma_update}
\end{align}
where $z_t = [x_t^\top,u_t^\top]^\top \in \R^{n+m}$.
\end{lemma}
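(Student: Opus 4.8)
The plan is to recognize that the dynamics \eqref{Model:learning_stationary} form a Bayesian linear regression in disguise and to establish \eqref{eq:mut}--\eqref{eq:sigma_update} by induction on $t$ using Gaussian conjugacy. Writing $z_t = [x_t^\top,u_t^\top]^\top$, the update reads $x_{t+1} = \theta_1^\top z_t + w_t$ with $w_t \sim N(0,I)$; taking the $i$th coordinate gives the scalar observation $x_{t+1}(i) = \theta_1(i)^\top z_t + w_t(i)$ with $w_t(i)\sim N(0,1)$ independent across $i$. Thus the $n$ columns $\theta_1(i)$ enter through $n$ decoupled scalar regressions that share the common regressor $z_t$ and unit noise variance. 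The base case $t=1$ is Assumption~\ref{assum:prior}, and for the inductive step I would first drop the compact support and track the unnormalized Gaussian $\bar\mu_t$.

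For the inductive step, assume $\bar\mu_t(\theta) = \prod_{i=1}^n N(\theta(i);\hat\theta_t(i),\Sigma_t)$. Conditioning on $h_{t+1}$, the control $u_t$ is a function of $h_t$ and of auxiliary randomization independent of $\theta_1$, so the conditional density of the new observation is $\prod_{i=1}^n N(x_{t+1}(i);\theta(i)^\top z_t,1)$, and the Bayes recursion factorizes over $i$. Each factor is a standard scalar-observation conjugate update, most cleanly written in precision form as $\Sigma_{t+1}^{-1} = \Sigma_t^{-1} + z_t z_t^\top$ and $\Sigma_{t+1}^{-1}\hat\theta_{t+1}(i) = \Sigma_t^{-1}\hat\theta_t(i) + z_t x_{t+1}(i)$. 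Applying the Sherman--Morrison identity to invert $\Sigma_t^{-1}+z_t z_t^\top$ yields \eqref{eq:sigma_update}; substituting $\Sigma_{t+1}$ back into the mean equation and simplifying (using $\Sigma_{t+1}z_t = \Sigma_t z_t/(1+z_t^\top\Sigma_t z_t)$) produces the innovation form \eqref{eq:theta_hat_update}. Crucially, the covariance recursion depends only on $z_t$ and not on the observations $x_{t+1}(i)$, so it is identical for every column; this is precisely why a single $\Sigma_t$ suffices and the product-of-Gaussians structure with a shared covariance propagates.

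Finally I would reinstate the compact support $\Omega_1$. Since the prior is $\mu_1 = \bar\mu_1|_{\Omega_1}$, i.e. $\mu_1(\theta)\propto \bar\mu_1(\theta)\,\ind_{\Omega_1}(\theta)$, the indicator factor simply multiplies through each update and is absorbed into the normalizing constant: if $\mu_t \propto \bar\mu_t\,\ind_{\Omega_1}$ then $\mu_{t+1}\propto \bar\mu_t\,\ind_{\Omega_1}\,\prob(x_{t+1}\mid h_t,u_t,\theta) \propto \bar\mu_{t+1}\,\ind_{\Omega_1}$, so $\mu_t = \bar\mu_t|_{\Omega_1}$ for all $t$ with the same $(\hat\theta_t,\Sigma_t)$ recursions. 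The only genuinely delicate points are verifying that conditioning on $h_{t+1}$ does not couple the columns (the noise covariance being the identity is essential here) and checking that the truncation commutes with the update rather than requiring its own correction; both are clarified by carrying the $\ind_{\Omega_1}$ factor explicitly. The remaining work — the Sherman--Morrison inversion and the algebraic reduction of the mean update to \eqref{eq:theta_hat_update} — is routine.
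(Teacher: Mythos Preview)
Your argument is correct: the reduction to $n$ decoupled Bayesian linear regressions with shared regressor $z_t$, the precision-form conjugate update, the Sherman--Morrison inversion, and the handling of the truncation to $\Omega_1$ are all standard and sound. The paper itself does not supply a proof of this lemma; it simply remarks that it ``can be proved using arguments for the least square estimator'' and cites \cite{sternby1977consistency}. Your sketch therefore provides exactly the details the paper omits, and there is nothing to compare against beyond noting that your derivation is the canonical one.
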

Lemma \ref{lm:bfupdates} can be proved using arguments for the least square estimator. For example, see \cite{sternby1977consistency} for a proof.

\begin{remark}
\label{rm:invcov}
Instead of the Kalman filter-type equation \eqref{eq:sigma_update}, $\Sigma_t$ can also be computed by
\begin{align}
\Sigma^{-1}_{t+1} = \Sigma^{-1}_t  + z_t z_t^\top.
\end{align}
\end{remark}

%Let the diagonal of $\Sigma_t$ be $\sigma^1_t,\dots,\sigma^d_t$. 
Let's introduce the Thompson Sampling with Dynamic Episodes ({\tt TSDE}) learning algorithm.
\begin{algorithm}[H]
\caption{{\tt TSDE}}
\begin{algorithmic}
\STATE Input: $\Omega_1, \hat \theta_1, \Sigma_1$ 
\STATE Initialization: $t\leftarrow 1$, $t_k \leftarrow 0$ 
\FOR{ episodes $k=1,2,...$}
	\STATE{$T_{k-1} \leftarrow t - t_{k}$}
	\STATE{$t_{k}  \leftarrow t$}
	\STATE{Generate $\tilde\theta_{k} \sim \mu_{t_k}$}
	\STATE{Compute $G_{k} = G(\tilde\theta_k)$ from \eqref{eq:DP_inf}-\eqref{eq:known_control}}

	\WHILE{$t \leq t_k+T_{k-1}$ and $\det(\Sigma_t) \geq 0.5\det(\Sigma_{t_k}) $}
	\STATE{Apply control $u_t = G_k x_t$}
	\STATE{Observe new state $x_{t+1}$}
	\STATE{Update $\mu_{t+1}$ according to \eqref{eq:theta_hat_update}-\eqref{eq:sigma_update}}
	\STATE{$t  \leftarrow t+1$}
	\ENDWHILE
\ENDFOR
\end{algorithmic}
\end{algorithm}

The {\tt TSDE} algorithm operates in episodes. 
Let $t_k$ be start time of the $k$th episode and $T_k = t_{k+1}-t_{k}$ be the length of the episode with the convention $T_0 = 1$.
From the description of the algorithm, $t_1 = 1$ and $t_{k+1}, k\geq 1,$ is given by
\begin{align}
t_{k+1} = \min\{ & t>t_{k}:\quad 
t > t_{k} + T_{k-1} 
\notag\\
&\quad \text{ or } 
\det(\Sigma_t) < 0.5\det(\Sigma_{t_k}) 
%\sigma_t^i < 0.5\sigma^i_{t_k} \text{ for some }i
\}.
\label{eq:tk}
\end{align}

At the beginning of episode $k$, a parameter $\theta_k$ is sampled from the posterior distribution $\mu_{t_k}$. During each episode $k$, controls are generated by the optimal gain $G_k$ for the sampled parameter $\theta_k$. 
One important feature of {\tt TSDE} is that its episode lengths are not fixed. The length $T_k$ of each episode is dynamically determined according to two stopping criteria: (i) $t > t_k+T_{k-1}$, and (ii) 
%$\sigma_t^i < 0.5\sigma^i_{t_k}$ for some $i$.
$\det(\Sigma_t) < 0.5\det(\Sigma_{t_k}) $.
The first stopping criterion provides that the episode length grows at a linear rate without triggering the second criterion.
The second stopping criterion ensures that the determinant of sample covariance matrix during an episode should not be less than half of the determinant of sample covariance matrix at the beginning of this episode.

\subsection{Thompson Sampling for Time-Varying Parameter}

For the learning problem with stationary parameter, we assume that the prior distribution $\mu_1$ which generates the parameter after each jump satisfies Assumption \ref{assum:prior}.
% and let $N_T = \sum_{t\leq T} j_t$ be the number of jumps up to time $T$.

We now introduce the Time-Varying Thompson Sampling with Dynamic Episodes ({\tt TSDE-TV}) learning algorithm.
%\begin{algorithm}[H]
%\caption{{\tt TSDE}-TV}
%\begin{algorithmic}
%\STATE Input: $\Omega_1, \hat \theta_1, \Sigma_1$ and a parameter $q$
%\STATE Initialization: $t\leftarrow 1$, $t_k \leftarrow 0$,  $s_l \leftarrow 1$, $l \leftarrow 1$ 
%\FOR{ episodes $k=1,2,...$}
%	\IF {$t \geq s_l+l^q$}
%		\STATE{Re-initialize: $T_{k-1} \leftarrow 1$, $\hat\theta_t \leftarrow \hat\theta_1$,$\Sigma_t \leftarrow \Sigma_1$ }
%		\STATE{$s_{l} \leftarrow t$, $l \leftarrow l+1$}
%	\ELSE
%		\STATE{$T_{k-1} \leftarrow t - t_{k}$}
%	\ENDIF
%	\STATE{$t_{k}  \leftarrow t$}
%	\STATE{Generate $\tilde\theta_{k} \sim \mu_{t_k}$}
%	\STATE{Compute $G_{k} = G(\tilde\theta_k)$ from \eqref{eq:DP_inf}-\eqref{eq:known_control}}
%
%	\WHILE{$t \leq t_k+T_{k-1}$ and $\det(\Sigma_t) \geq 0.5\det(\Sigma_{t_k}) $}
%	\STATE{Apply control $u_t = G_k x_t$}
%	\STATE{Observe new state $x_{t+1}$}
%	\STATE{Update $\mu_{t+1}$ according to \eqref{eq:theta_hat_update}-\eqref{eq:sigma_update}}
%	\STATE{$t  \leftarrow t+1$}
%	\ENDWHILE
%\ENDFOR
%\end{algorithmic}
%\end{algorithm}
\begin{algorithm}[H]
\caption{\tt TSDE-TV}
\begin{algorithmic}
\STATE Input: $\Omega_1, \hat \theta_1, \Sigma_1$ and a parameter $q$
\STATE Initialization: $t\leftarrow 1$, $t_k \leftarrow 0$,  $s_l \leftarrow 1$, $l \leftarrow 1$ 
\FOR{ episodes $k=1,2,...$}
		\STATE{$T_{k-1} \leftarrow t - t_{k}$}
	\STATE{$t_{k}  \leftarrow t$}
	\STATE{Generate $\tilde\theta_{k} \sim \mu_{t_k}$}
	\STATE{Compute $G_{k} = G(\tilde\theta_k)$ from \eqref{eq:DP_inf}-\eqref{eq:known_control}}

	\WHILE{$t \leq t_k+T_{k-1}$ and $\det(\Sigma_t) \geq 0.5\det(\Sigma_{t_k}) $}
	\IF {$t \geq s_l+l^q$}
		\STATE{Re-initialize: $t_{k} \leftarrow t-1$, $\hat\theta_t \leftarrow \hat\theta_1$,$\Sigma_t \leftarrow \Sigma_1$ }
		\STATE{$s_{l} \leftarrow t$, $l \leftarrow l+1$}
		\STATE{\textbf{break}}
	\ELSE	
	\STATE{Apply control $u_t = G_k x_t$}
	\STATE{Observe new state $x_{t+1}$}
	\STATE{Update $\mu_{t+1}$ according to \eqref{eq:theta_hat_update}-\eqref{eq:sigma_update}}
	\STATE{$t  \leftarrow t+1$}
	\ENDIF
	\ENDWHILE
\ENDFOR
\end{algorithmic}
\end{algorithm}

In {\tt TSDE-TV}, $s_l$ is the time when the algorithm re-initializes. 
The idea of {\tt TSDE-TV} is to re-initialize {\tt TSDE} to adapt to the jumps of the model parameter.
{\tt TSDE-TV} re-initializes if the time difference between the current episode and the previous re-initialization is long enough. The time difference between two re-initializations is $l^q$ which increases at a rate determined by the parameter $q$.

\section{Regret Analysis for Stationary Parameter}\label{sec:analysis}

In this section, we analyze the regret of {\tt TSDE} in the stationary parameter case. In the regret analysis, we make the following assumption on the prior distribution.
\begin{assumption}
\label{assum:stabilizability}
There exists a positive number $\delta <1$ such that for any $\theta \in \Omega_1$, we have
$\rho(A_1+ B_1 G(\theta)) \leq \delta$. Here $\rho(\cdot)$ is the spectral radius of a matrix, i.e. the largest absolute value of its eigenvalues.
\end{assumption}
This assumption ensures that the closed-loop system is stable under the learning algorithm. 
A weaker assumption in \cite{ibrahimi2012efficient} can ensure that Assumption \ref{assum:stabilizability} is satisfied for $\theta= \theta_k$ with high probability.

Since $J(\cdot)$, $S(\cdot)$, and $G(\cdot)$ are well-defined functions on the compact set $\Omega_1$, there exists finite numbers $M_{J}$, $M_{\theta}$, $M_S$, and $M_G$ such that
$J(\theta) \leq M_{J}$,
$||\theta|| \leq M_{\theta}$, $||S(\theta)|| \leq M_S$, and 
$||[I, G(\theta)^\top]|| \leq M_G$ for all $\theta \in \Omega_1$.

The main result of this section is the following bound on expected regret of {\tt TSDE} in the stationary parameter case.
\begin{theorem}
\label{thm:regret}
Under Assumptions \ref{assum:prior} and \ref{assum:stabilizability}, the expected regret \eqref{eq:regret_stationary} of {\tt TSDE} satisfies
\begin{align}
R(T,\text{{\tt TSDE}}) \leq \tilde O\Big(
\sqrt{T}
\Big)
\end{align}
where $\tilde O(\cdot)$ hides all constants and logarithmic factors.
\end{theorem}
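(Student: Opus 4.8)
The plan is to decompose the regret into pieces that each can be controlled, following the standard Bayesian-regret template for Thompson sampling but adapted to the LQ setting.

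\textbf{Step 1: Bound the number of episodes.} First I would control the total number of episodes $K_T$ up to time $T$. The two stopping criteria interact: the first criterion forces $T_k \le T_{k-1}+1$, so lengths can grow at most linearly, while the second (the determinant doubling trick) can only fire a bounded number of times because $\det(\Sigma_t)$ is monotonically decreasing in $t$ (by Remark \ref{rm:invcov}, $\Sigma_{t+1}^{-1} = \Sigma_t^{-1}+z_tz_t^\top$) and starts from a fixed $\det(\Sigma_1)$. The linear-growth criterion alone forces $K_T = O(\sqrt{T})$ episodes, and the determinant criterion adds at most $O(\log T)$ triggers (each halving costs a factor of $2$, and $\det(\Sigma_t)$ can shrink only polynomially in $T$ since $\|z_t\|$ is bounded under the stability Assumption \ref{assum:stabilizability}). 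The upshot I would prove is $K_T = \tilde O(\sqrt{T})$.

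\textbf{Step 2: Reduce regret to a sum over episodes via the Bellman equation.} Next I would use the Bellman equation \eqref{eq:DP_inf} for the \emph{sampled} parameter $\tilde\theta_k$ to rewrite the per-stage cost. On episode $k$ the control is $u_t = G(\tilde\theta_k)x_t$, which is optimal for $\tilde\theta_k$, so $c_t + \eec{x_{t+1}^\top S(\tilde\theta_k)x_{t+1}}{h_t,\tilde\theta_k} = J(\tilde\theta_k)+x_t^\top S(\tilde\theta_k)x_t$, where the next-state expectation uses the \emph{true} $\theta_1$ dynamics. Summing and telescoping $x_t^\top S(\tilde\theta_k)x_t$ within each episode turns $\sum_t c_t$ into $\sum_k T_k J(\tilde\theta_k)$ plus two residual terms: a telescoping boundary term $\sum_k [x_{t_k}^\top S(\tilde\theta_k)x_{t_k} - x_{t_{k+1}}^\top S(\tilde\theta_k)x_{t_{k+1}}]$, and a mismatch term $\sum_t x_t^\top (\theta_1 - \tilde\theta_k)$-type expression coming from the difference between the true and sampled closed-loop maps. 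The crucial probabilistic ingredient is that, conditioned on $h_{t_k}$, the sampled $\tilde\theta_k$ and the true $\theta_1$ have the same posterior law $\mu_{t_k}$; because the episode-start time $t_k$ is a stopping time, this gives $\ee[J(\tilde\theta_k)] = \ee[J(\theta_1)]$, so $\ee[\sum_k T_k J(\tilde\theta_k)] = \ee[\sum_k T_k J(\theta_1)] = \ee[T\, J(\theta_1)]$ up to an $O(\ee[K_T])$ correction from the last incomplete episode. This is what cancels the $-J(\theta_1)$ in the regret definition \eqref{eq:regret_stationary}.

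\textbf{Step 3: Bound the two residual terms.} I would then show each residual is $\tilde O(\sqrt{T})$ in expectation. The telescoping boundary term collapses to a sum of $O(K_T)$ terms of the form $x_{t_k}^\top(S(\tilde\theta_k)-S(\tilde\theta_{k-1}))x_{t_k}$, each bounded by $M_S\,\ee[\|x_{t_k}\|^2]$; stability (Assumption \ref{assum:stabilizability}) gives uniformly bounded second moments $\ee[\|x_t\|^2] = O(1)$, so this term is $O(\ee[K_T]) = \tilde O(\sqrt{T})$. For the mismatch term, I would bound $\sum_t \|(\theta_1-\tilde\theta_k)^\top z_t\|$ using that on each episode (before the determinant halves) the prediction-error energy is tied to $\sum_t z_t^\top \Sigma_t z_t$, a quantity that telescopes against $\log\det(\Sigma_t^{-1})$ via the matrix-determinant identity in \eqref{eq:sigma_update}–Remark \ref{rm:invcov}. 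A Cauchy–Schwarz split over the $\tilde O(\sqrt{T})$ episodes, combined with the concentration $\ee[\|\tilde\theta_k-\theta_1\|^2 \mid h_{t_k}] \le \tr(\Sigma_{t_k})$ and the $\tilde O(1)$ log-determinant budget, yields the $\tilde O(\sqrt{T})$ scaling.

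\textbf{The main obstacle} I anticipate is making Step 2 rigorous: the identity $\ee[J(\tilde\theta_k)] = \ee[J(\theta_1)]$ requires that $t_k$ be a stopping time with respect to the filtration generated by $h_{t_k}$ and that $\tilde\theta_k$ be drawn from exactly the posterior $\mu_{t_k}$, so that $\tilde\theta_k$ and $\theta_1$ are conditionally i.i.d.\ given $h_{t_k}$. The subtlety is that the second stopping criterion depends on $\Sigma_t$ (hence on future $z_t$'s driven by $\tilde\theta_k$), so one must verify that the sampling-matching property is preserved episode by episode and that the resulting control is adapted. Keeping the bounded-second-moment estimate $\ee[\|x_t\|^2]=O(1)$ uniform across all the telescoping and Cauchy–Schwarz steps — which rests entirely on the uniform stability $\rho(A_1+B_1G(\theta))\le\delta<1$ of Assumption \ref{assum:stabilizability} — is the technical linchpin that makes every $\tilde O(\sqrt{T})$ bound go through.
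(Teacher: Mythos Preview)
Your decomposition and overall architecture match the paper's: Bellman telescoping on the sampled parameter, a three-term split $R_0+R_1+R_2$, Cauchy--Schwarz with the $\Sigma_t$-normalized quantities for the mismatch, and a $\tilde O(\sqrt{T})$ episode count. But Step~2 contains a genuine gap that is in fact the central issue the paper is designed to fix.

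You assert $\ee\big[\sum_k T_k J(\tilde\theta_k)\big]=\ee\big[\sum_k T_k J(\theta_1)\big]$. This equality fails: the episode length $T_k=t_{k+1}-t_k$ is \emph{not} $\sigma(h_{t_k})$-measurable, because the trajectory after $t_k$ (hence $\Sigma_t$ and both stopping criteria) is driven by the control $G(\tilde\theta_k)x_t$ and therefore depends on $\tilde\theta_k$. The posterior-matching identity (Lemma~\ref{lm:PS_expectation}) only gives $\ee[f(\tilde\theta_k,X)]=\ee[f(\theta_1,X)]$ for $X$ measurable at time $t_k$. This is exactly the mistake in \cite{abbasi2015bayesian} that \cite{osband2016posterior} flagged, and which the introduction of this paper highlights. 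Your ``main obstacle'' paragraph circles the issue but misidentifies it: the concern is not whether $t_k$ is a stopping time (it is), but whether $T_k$ is measurable at $t_k$ (it is not).

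The paper's fix --- and the reason the first stopping criterion exists --- is the pathwise bound $T_k\le T_{k-1}+1$. Since $\mathds{1}_{\{t_k\le T\}}(T_{k-1}+1)$ \emph{is} $\sigma(h_{t_k})$-measurable, one gets
\[
\ee\big[\mathds{1}_{\{t_k\le T\}}T_kJ(\tilde\theta_k)\big]\le \ee\big[\mathds{1}_{\{t_k\le T\}}(T_{k-1}+1)J(\tilde\theta_k)\big]=\ee\big[\mathds{1}_{\{t_k\le T\}}(T_{k-1}+1)J(\theta_1)\big],
\]
and summing yields $R_0\le M_J\ee[K_T]$. The same $T_k\le T_{k-1}+1$ trick is needed again inside $R_2$ (Lemma~\ref{lm:ineq_sigmatheta}) to make $\ee\big[\sum_k T_k\|\Sigma_{t_k}^{-0.5}(\theta_1-\tilde\theta_k)\|^2\big]$ tractable. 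You invoke the first criterion only to count episodes in Step~1; its real purpose is this measurability replacement.

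A secondary issue: in Step~3 you bound $R_1$ via $\ee[\|x_{t_k}\|^2]=O(1)$, but $K_T$ and the $x_{t_k}$'s are correlated, so $\ee[\sum_{k\le K_T}\|x_{t_k}\|^2]\neq \ee[K_T]\cdot O(1)$ without further argument. The paper instead uses the pathwise bound $\sum_k\|x_{t_k}\|^2\le K_T X_T^2$ with $X_T=\max_{t\le T}\|x_t\|$, combined with the moment estimate $\ee[X_T^j]=O(\log T)$ (Lemma~\ref{lm:bound_X}) and a pathwise bound $K_T\le O(\sqrt{T\log(TX_T^2)})$.
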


To prove Theorem \ref{thm:regret}, we first provide bounds on the system state and the number of episodes.
Then, we give a decomposition for the expected regret and derive upper bounds for each term of the regret.

Let $X_T = \max_{t\leq T} \lVert x_{t} \Vert$ be the maximum value of the norm of the state and 
$K_T$ be the number of episodes over the horizon $T$. Then we have the following properties.

\begin{lemma}
\label{lm:bound_X}
For any $j\geq 1$ and any $T$ we have
\begin{align}
\ee \Big[ X_T^j \Big] \leq O\Big( \log(T)(1-\delta)^{-j}\Big).
\end{align}
\end{lemma}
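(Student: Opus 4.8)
The plan is to bound the moments of $X_T = \max_{t\le T}\lVert x_t\rVert$ by exploiting the closed-loop stability guaranteed by Assumption~\ref{assum:stabilizability}. First I would write the state recursion explicitly within each episode. During episode $k$ the control is $u_t = G_k x_t$ with $G_k = G(\tilde\theta_k)$, so the \emph{true} closed-loop dynamics are
\begin{align}
x_{t+1} = (A_1 + B_1 G_k)\,x_t + w_t = L_k x_t + w_t,
\end{align}
where I abbreviate $L_k = A_1 + B_1 G(\tilde\theta_k)$. Since $\tilde\theta_k \in \Omega_1$, Assumption~\ref{assum:stabilizability} gives $\rho(L_k)\le \delta < 1$ uniformly. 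Unrolling the recursion from the start of the current episode (or simply from $t=1$, absorbing the episode structure into a single time-varying but uniformly contractive product) expresses $x_t$ as a sum $x_t = \sum_{s<t} \big(\prod_{r} L_{(r)}\big) w_s$ of noise terms weighted by products of closed-loop matrices, each product being a contraction in spectral radius.

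The main technical issue is that $\rho(L_k)\le\delta$ controls the \emph{asymptotic} growth rate of matrix powers but not the one-step operator norm, so $\lVert L_k^j\rVert$ can transiently exceed $1$. The standard fix is to pass to a submultiplicative bound: there is a finite constant $C$ (depending only on $\Omega_1$, hence uniform across episodes by compactness) and a rate $\delta$ such that $\lVert \prod L_{(r)}\rVert \le C\,\delta^{\,(\text{number of factors})}$ for any product of these closed-loop matrices. One clean way to obtain this uniformly is to note $\lVert L_k\rVert$ is bounded on the compact set $\Omega_1$ and to use a Gelfand-type / simultaneous-contraction argument, or to simply accept a uniform constant $C$ absorbing the gap between spectral radius and norm. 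Granting this, I would bound $\lVert x_t\rVert \le C\sum_{s<t}\delta^{\,t-1-s}\lVert w_s\rVert$, so that
\begin{align}
\lVert x_t\rVert \le C \sum_{s=1}^{t-1} \delta^{\,t-1-s}\lVert w_s\rVert \le \frac{C}{1-\delta}\max_{s\le t}\lVert w_s\rVert \cdot (1-\delta)\sum_{s}\delta^{\,t-1-s},
\end{align}
giving the linear-in-$(1-\delta)^{-1}$ dependence, with the maximal noise norm carrying the $\log T$.

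From here the argument is routine probability. Taking the maximum over $t\le T$ and raising to the $j$th power, I would bound $X_T^j \le \big(\tfrac{C}{1-\delta}\big)^j \big(\max_{t\le T}\lVert w_t\rVert\big)^j$ (the geometric sums contributing only constants absorbed into $C$). Since each $w_t \sim N(0,I)$ is standard Gaussian, $\lVert w_t\rVert^2$ is chi-squared with $n$ degrees of freedom and has Gaussian tails; the expected $j$th power of the maximum of $T$ such subgaussian/sub-exponential variables scales like $O\big((\log T)^{j/2}\big)$, and in any case is $O(\log T)$ up to the constants and polylog factors that the $O(\cdot)$ notation absorbs. Combining, $\ee[X_T^j] \le O\big(\log(T)\,(1-\delta)^{-j}\big)$, as claimed. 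I expect the \emph{main obstacle} to be establishing the uniform submultiplicative contraction bound $\lVert \prod L_{(r)}\rVert \le C\delta^{(\cdot)}$ rigorously across all sampled parameters and episode boundaries, since spectral-radius bounds alone do not immediately yield norm bounds; the tail bound on the Gaussian maximum and the geometric-sum algebra are comparatively standard.
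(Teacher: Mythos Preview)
Your strategy matches the paper's: bound $\lVert x_t\rVert$ by a geometric sum of $\lVert w_s\rVert$ via the closed-loop recursion, extract the factor $(1-\delta)^{-1}$, and then control $\ee\big[(\max_{t\le T}\lVert w_t\rVert)^j\big]$. Two points of comparison are worth making.

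First, the paper's proof simply writes $\lVert (A_1+B_1G(\tilde\theta_k))x_t\rVert \le \rho(A_1+B_1G(\tilde\theta_k))\,\lVert x_t\rVert$ and labels this ``the property of spectral radius''---an inequality that is false for non-normal matrices in general. So the issue you flag as the \emph{main obstacle} is genuine and is \emph{not} handled in the paper; your proposed repair (a uniform constant $C$ from compactness of $\Omega_1$, via a Gelfand-type argument or a common Lyapunov norm) is exactly what is needed to make that step rigorous.

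Second, for the Gaussian maximum the paper uses a Jensen/exponential-moment trick,
\[
\exp\!\big(\ee[\max_{t\le T}\lVert w_t\rVert^j]\big)\le \ee\big[\max_{t\le T}\exp(\lVert w_t\rVert^j)\big]\le T\,\ee[\exp(\lVert w_1\rVert^j)],
\]
which delivers the single $\log T$ in the statement but tacitly assumes $\ee[\exp(\lVert w_1\rVert^j)]<\infty$, problematic for a standard Gaussian once $j\ge 2$. Your route via subgaussian tails gives $(\log T)^{j/2}$, which is safer though nominally weaker than the stated $O(\log T)$ for $j>2$. Since all downstream uses in the paper are inside $\tilde O(\cdot)$, either form suffices for the regret analysis.
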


\begin{lemma}
\label{lm:KT}
The number of episodes is bounded by
\begin{align}
K_T \leq 
O\left( \sqrt{ 2d T\log(T X_T^2)} \right).
\end{align}
\end{lemma}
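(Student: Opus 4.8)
The plan is to bound $K_T$ by counting how many episodes can be triggered by each of the two stopping criteria and then summing these counts. Let me denote by $K_T^{(1)}$ the number of episodes that end because the first criterion $t > t_k + T_{k-1}$ is met (the linear-growth rule), and by $K_T^{(2)}$ the number that end because the second criterion $\det(\Sigma_t) < 0.5\det(\Sigma_{t_k})$ is met (the determinant-doubling rule). Since every episode terminates for at least one of these two reasons, we have $K_T \le K_T^{(1)} + K_T^{(2)}$, and I would bound each term separately.

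For the first criterion, the key observation is that it enforces $T_k \le T_{k-1} + 1$, so episode lengths can grow by at most one each time this rule fires. I would use this to show that the episode lengths cannot stay small for too many consecutive episodes without the total elapsed time exceeding $T$. Concretely, since $\sum_{k=1}^{K_T} T_k \ge T$ (the episodes cover the horizon) and the lengths triggered by the first rule satisfy a near-arithmetic growth $T_k \le T_{k-1}+1$, a standard summation argument gives that $K_T^{(1)}$ grows like $\sqrt{T}$; if there were $K$ episodes governed by this rule, their total length would be at least on the order of $\sum_{j} j$, forcing $K = O(\sqrt{T})$.

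For the second criterion, I would exploit the telescoping structure of the determinants. Each time the second rule fires at the end of an episode, the determinant has dropped by at least a factor of $1/2$, so $\det(\Sigma_{t_{k+1}}) < 0.5 \det(\Sigma_{t_k})$. Using Remark \ref{rm:invcov}, namely $\Sigma_{t+1}^{-1} = \Sigma_t^{-1} + z_t z_t^\top$, the matrix $\Sigma_t^{-1}$ is monotonically increasing, so $\det(\Sigma_t^{-1})$ grows as data accumulate. The total growth is controlled: $\det(\Sigma_T^{-1}) = \det(\Sigma_1^{-1} + \sum_{t=1}^{T-1} z_t z_t^\top)$, and since $\|z_t\|^2 \le (1+M_G^2)X_T^2$ is bounded in terms of $X_T$, the determinant $\det(\Sigma_T^{-1})$ can be at most polynomial in $T X_T^2$. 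Because each firing of the second criterion at least doubles $\det(\Sigma_t^{-1})$, the number of such firings is bounded by $\log_2$ of the total determinant ratio, which yields $K_T^{(2)} = O(\sqrt{d T \log(T X_T^2)})$ after combining with the counting from the first criterion.

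I expect the main obstacle to be coupling the two counts correctly, since the determinant ratio that controls $K_T^{(2)}$ depends on the number of time steps, which is itself constrained by the first criterion. The cleanest route is to first establish that $\log \det(\Sigma_1^{-1}) - \log \det(\Sigma_T^{-1})$ (equivalently the accumulated log-determinant increase) is at most $O(d \log(T X_T^2))$ via the trace/AM-GM bound $\log\det(I + M) \le d \log(1 + \tr(M)/d)$ applied to $\Sigma_1^{1/2}(\sum_t z_t z_t^\top)\Sigma_1^{1/2}$, and then to observe that the number of episodes is bounded by roughly $\sqrt{2 d T \log(T X_T^2)}$ through a Cauchy--Schwarz argument relating $\sum_k \sqrt{\cdot}$ over episodes to the square root of the product of $K_T$ and the total log-determinant budget. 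Tying the $\sqrt{T}$ factor from the first criterion together with the $\sqrt{d\log(T X_T^2)}$ factor from the second into a single $\sqrt{2dT\log(TX_T^2)}$ bound is the delicate bookkeeping step.
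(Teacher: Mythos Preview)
Your additive decomposition $K_T \le K_T^{(1)} + K_T^{(2)}$ has a genuine gap: the claim $K_T^{(1)} = O(\sqrt{T})$ is not correct. The first stopping rule only guarantees $T_k = T_{k-1}+1$ for an episode that ends by that rule; it does \emph{not} guarantee that the lengths of first-criterion episodes form a single arithmetic progression $1,2,3,\dots$. Every time the second criterion fires, the episode length can drop back to $1$, and the next run of first-criterion episodes restarts from length $2$. Concretely, if there are $M$ second-criterion terminations (so $K_T^{(2)}=M$) and each is followed by a run of first-criterion episodes with lengths $2,3,\dots,p$ before the next determinant halving, then $T \approx M\cdot p^2/2$ while $K_T^{(1)} \approx M(p-1) \approx \sqrt{2MT}$, which is $\Theta(\sqrt{dT\log(TX_T^2)})$ and not $O(\sqrt{T})$. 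So the ``$\sum_j j$'' lower bound on total time you invoke applies only within a single uninterrupted run, not globally.

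The fix is exactly the coupling you sense is delicate but do not carry out: group episodes into \emph{macro-episodes}, one per second-criterion firing. Within macro-episode $i$ (of total duration $\tilde T_i$) the first rule forces $T_k = T_{k-1}+1$, so the number of episodes it contains is at most $\sqrt{2\tilde T_i}$. Summing and applying Cauchy--Schwarz over macro-episodes gives $K_T \le \sum_{i=1}^{M}\sqrt{2\tilde T_i} \le \sqrt{2M\sum_i \tilde T_i} = \sqrt{2MT}$. Your determinant argument then bounds $M$: each macro-episode start doubles $\det(\Sigma_t^{-1})$, and $\det(\Sigma_T^{-1}) \le (\tr(\Sigma_1^{-1}) + \sum_{t<T}\|z_t\|^2)^d$ with $\|z_t\| \le M_G X_T$, yielding $M = O(d\log(TX_T^2))$. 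Combining gives the stated bound. The Cauchy--Schwarz you mention in your last paragraph needs to be over macro-episodes (indexed by $i$, with $M$ terms), not over episodes (indexed by $k$, with $K_T$ terms); that is the missing structural idea.
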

The proofs of Lemmas \ref{lm:bound_X} and \ref{lm:KT} are in the appendix.

Following the steps in \cite{abbasi2011regret} using the Bellman equation \eqref{eq:DP_inf}, for $t_k \leq t < t_{k+1}$ during the $k$th episode, the cost of {\tt TSDE} satisfies
\begin{align}
c_t = & J(\tilde\theta_k) +  x^\top_t S(\tilde\theta_k) x_t
- \ee \Big[ x_{t+1}^\top S(\tilde\theta_k) x_{t+1}|x_t,\tilde\theta_k\Big]
\notag\\
& +(\theta_1^\top z_t)^\top S(\tilde\theta_k) \theta_1^\top z_t - (\tilde\theta_k^\top z_t)^\top S(\tilde\theta_k)\tilde\theta_k^\top z_t .
 \label{eq:DP_analysis3}
\end{align}

Then from \eqref{eq:DP_analysis3}, the expected regret of {\tt TSDE} can be decomposed into
\begin{align}
R(T,\text{{\tt TSDE}})
=& \ee\Big[ \sum_{k=1}^{K_T}\sum_{t=t_k}^{t_{k+1}-1} c_t  \Big] - T \ee\Big[J(\theta_1) \Big]
\notag\\
= & R_0+R_1 + R_2
\label{eq:regret_decomp}
\end{align}
where
\begin{align}
&R_0 = \ee\Big[ \sum_{k=1}^{K_T}T_k J(\tilde\theta_k) \Big]- T \ee\Big[J(\theta_1) \Big],
\\
&R_1 = \ee\Big[\sum_{k=1}^{K_T}\sum_{t=t_k}^{t_{k+1}-1}  \Big[x^\top_t S(\tilde\theta_k) x_t - x_{t+1}^\top S(\tilde\theta_k) x_{t+1}\Big]\Big],
\\
&R_2 
= \ee\Big[\sum_{k=1}^{K_T}\sum_{t=t_k}^{t_{k+1}-1}  
\notag\\
&\hspace{3em}
\Big[ (\theta_1^\top z_t)^\top S(\tilde\theta_k) \theta_1^\top z_t 
-(\tilde\theta_k^\top z_t)^\top S(\tilde\theta_k) \tilde\theta_k^\top z_t
\Big]\Big].
\end{align}

In the following, we proceed to derive bounds on $R_0$, $R_1$ and $R_2$.

As discussed in \cite{osband2013more,osband2016posterior,russo2014learning}, one key property of Thompson/Posterior Sampling algorithms is that for any function $f$, $\ee[f(\theta_t)] = \ee[f(\theta_1)]$ if $\theta_t$ is sampled from the posterior distribution at time $t$. 
%This property leads to regret bounds for algorithms with fixed sampling episodes since the start time $t_k$ of each episode is deterministic.
However, our {\tt TSDE} algorithm has dynamic episodes that requires us to have the stopping-time version of the above property whose proof is in the appendix.
\begin{lemma}
\label{lm:PS_expectation}
Under {\tt TSDE}, $t_k$ is a stopping time for any episode $k$. Then for any measurable function $f$ and any $\sigma(h_{t_k})-$measurable random variable $X$, we have
\begin{align}
\ee \Big[ f(\tilde{\theta}_{k},X)\Big]= & \ee \Big[ f(\theta_1,X) \Big].
\end{align}
\end{lemma}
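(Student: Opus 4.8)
The plan is to prove the stopping-time version of the posterior-sampling identity by conditioning on the history $h_{t_k}$ at the (random) start of episode $k$ and exploiting the fact that, conditioned on this history, the sampled parameter $\tilde\theta_k$ and the true parameter $\theta_1$ are identically distributed. The first step is to establish that $t_k$ is indeed a stopping time with respect to the filtration $(\sigma(h_t))_{t\geq 1}$. This follows by induction from the recursive definition \eqref{eq:tk}: the event $\{t_{k+1} = s\}$ depends only on the quantities $T_{k-1}$, $\Sigma_{t_k}$, and $\{\det(\Sigma_\tau)\}_{\tau \leq s}$, all of which are measurable with respect to the observations up to time $s$, since the covariance updates \eqref{eq:sigma_update} are deterministic functions of the past states and controls.

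The core step is the conditional distributional identity. The key observation is that $\mu_{t_k}$ is by definition the posterior of $\theta_1$ given $h_{t_k}$, and $\tilde\theta_k$ is drawn from $\mu_{t_k}$ conditionally independently of everything else given $h_{t_k}$. Therefore, conditioned on $\sigma(h_{t_k})$, both $\theta_1$ and $\tilde\theta_k$ have the same law, namely $\mu_{t_k}$. Consequently, for any measurable $f$ and any $\sigma(h_{t_k})$-measurable $X$, the conditional expectations agree:
\begin{align}
\eec{f(\tilde\theta_k, X)}{h_{t_k}} = \eec{f(\theta_1, X)}{h_{t_k}},
\label{eq:cond_identity}
\end{align}
since $X$ acts as a constant under this conditioning and we are integrating $f(\cdot, X)$ against the common conditional law. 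Taking expectations of both sides of \eqref{eq:cond_identity} and applying the tower property then yields the claimed identity $\ee[f(\tilde\theta_k, X)] = \ee[f(\theta_1, X)]$.

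I would make the conditioning rigorous by treating the random start time $t_k$ via the stopping-time $\sigma$-algebra $\mathcal{F}_{t_k}$, decomposing over the events $\{t_k = s\}$ for each deterministic $s$, and noting that on each such event the argument above reduces to a statement about the fixed-time posterior $\mu_s$. On $\{t_k = s\}$ the variable $X$ is $\sigma(h_s)$-measurable, and $\tilde\theta_k$ conditioned on $h_s$ is distributed as $\mu_s$, which is also the conditional law of $\theta_1$ given $h_s$; summing over $s$ recovers \eqref{eq:cond_identity}.

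The main obstacle I expect is the care needed in handling the randomized sampling of $\tilde\theta_k$ together with the stopping time simultaneously: one must verify that the conditional independence of $\tilde\theta_k$ from the future noise $\{w_t\}_{t \geq t_k}$ (given $h_{t_k}$) is preserved when $t_k$ is random, so that the decomposition over $\{t_k = s\}$ is valid and the sampling distribution genuinely equals the posterior $\mu_s$ on that event. This is essentially the point where the earlier flawed analyses went wrong, so the delicate part is ensuring that $\tilde\theta_k$ depends on the past only through $h_{t_k}$ and not on any information revealed after the episode begins, which is guaranteed by the algorithm sampling $\tilde\theta_k$ at the very start of the episode before any new control is applied.
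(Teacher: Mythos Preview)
Your proposal is correct and follows essentially the same approach as the paper's proof: condition on $h_{t_k}$, use that both $\tilde\theta_k$ and $\theta_1$ have the conditional law $\mu_{t_k}$ so that $\eec{f(\tilde\theta_k,X)}{h_{t_k}} = \int f(\theta,X)\,\mu_{t_k}(d\theta) = \eec{f(\theta_1,X)}{h_{t_k}}$, and then take expectations. Your treatment is somewhat more careful than the paper's in spelling out the inductive verification that $t_k$ is a stopping time and the decomposition over the events $\{t_k = s\}$, but the underlying argument is identical.
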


Based on the key property of Lemma \ref{lm:PS_expectation}, we establish an upper bound on $R_0$.
\begin{lemma}
\label{lm:R0}
The first term $R_0$ is bounded as
\begin{align}
R_0 \leq &M_J \ee[K_T].
\end{align}
\end{lemma}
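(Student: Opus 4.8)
The plan is to exploit the first stopping criterion of {\tt TSDE} together with the stopping-time posterior-sampling identity of Lemma \ref{lm:PS_expectation}. The difficulty in bounding $R_0$ is that the episode length $T_k$ is correlated with the sampled parameter $\tilde\theta_k$: both $T_k$ and $\tilde\theta_k$ influence the controls and hence the evolution of $\Sigma_t$ that triggers the stopping rules, so I cannot directly replace $J(\tilde\theta_k)$ by $J(\theta_1)$ inside the weighted sum. The observation that breaks this coupling is that the first stopping criterion $t > t_k + T_{k-1}$ in \eqref{eq:tk} forces $T_k \leq T_{k-1} + 1$, so the weight on $J(\tilde\theta_k)$ can be controlled by the \emph{previous} episode length $T_{k-1}$, which is already determined by the history $h_{t_k}$ at the start of episode $k$.

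First I would bound $\sum_{k=1}^{K_T} T_k J(\tilde\theta_k) \leq \sum_{k=1}^{K_T} (T_{k-1}+1) J(\tilde\theta_k)$ using $T_k \leq T_{k-1}+1$, and then rewrite the random-length sum as $\sum_{k=1}^{\infty} \ind\{t_k \leq T\}(T_{k-1}+1) J(\tilde\theta_k)$. Since $J(\theta) = \tr(S(\theta)) \geq 0$ (as $S(\theta)$ is positive definite), Tonelli's theorem lets me move the expectation inside the sum. For each fixed $k$, the weight $X_k := \ind\{t_k \leq T\}(T_{k-1}+1)$ is $\sigma(h_{t_k})$-measurable, because both $T_{k-1}$ and the event $\{t_k \leq T\}$ are determined by the history at the start of episode $k$. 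Applying Lemma \ref{lm:PS_expectation} with $f(\theta,X) = X\,J(\theta)$ then gives $\ee[X_k J(\tilde\theta_k)] = \ee[X_k J(\theta_1)]$ term by term.

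Summing back up yields $\ee[\sum_{k=1}^{K_T} T_k J(\tilde\theta_k)] \leq \ee[J(\theta_1) \sum_{k=1}^{K_T}(T_{k-1}+1)]$. Next I would evaluate the telescoping sum $\sum_{k=1}^{K_T} T_{k-1} = T_0 + \sum_{k=1}^{K_T-1} T_k = t_{K_T}$ using $T_0 = 1$ and $t_1 = 1$, so that $\sum_{k=1}^{K_T}(T_{k-1}+1) = t_{K_T} + K_T$. Since episode $K_T$ starts no later than $T$, we have $t_{K_T} \leq T$, and combined with $J(\theta_1) \geq 0$ this gives $\ee[J(\theta_1)\, t_{K_T}] \leq T\,\ee[J(\theta_1)]$. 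Substituting into the definition of $R_0$, the two $T\,\ee[J(\theta_1)]$ terms cancel, leaving $R_0 \leq \ee[J(\theta_1) K_T] \leq M_J\,\ee[K_T]$ by the uniform bound $J(\theta_1) \leq M_J$.

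I expect the main obstacle to be the careful bookkeeping required to apply Lemma \ref{lm:PS_expectation} to a sum with a \emph{random} upper limit $K_T$: the identity is stated only for a fixed episode index with a $\sigma(h_{t_k})$-measurable weight, so the indicator rewriting via $\ind\{t_k \leq T\}$ and the verification that this indicator together with $T_{k-1}$ are $h_{t_k}$-measurable are the technically delicate steps. Everything else reduces to the telescoping computation and the nonnegativity and uniform boundedness of $J$.
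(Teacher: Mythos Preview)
Your proposal is correct and follows essentially the same approach as the paper: bound $T_k \leq T_{k-1}+1$, rewrite the sum with the indicator $\ind\{t_k\le T\}$, use nonnegativity of $J$ to interchange expectation and infinite sum, apply Lemma~\ref{lm:PS_expectation} with the $\sigma(h_{t_k})$-measurable weight $\ind\{t_k\le T\}(T_{k-1}+1)$, and conclude via $\sum_{k=1}^{K_T} T_{k-1} \leq T$ and $J(\theta_1)\le M_J$. Your explicit computation $\sum_{k=1}^{K_T} T_{k-1} = t_{K_T} \leq T$ is a slightly sharper justification of the same inequality the paper states without detail.
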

\begin{proof}
From monotone convergence theorem, we have
\begin{align}
R_0 
= &\ee\Big[ \sum_{k=1}^{\infty}\mathds{1}_{\{t_{k}\leq T\}}T_k J(\tilde\theta_k) \Big]- T \ee\Big[J(\theta_1) \Big]
\notag\\
=& \sum_{k=1}^{\infty}\ee\Big[ \mathds{1}_{\{t_{k}\leq T\}}T_k J(\tilde\theta_{k})\Big]  - T\ee\Big[J(\theta_1) \Big].
\end{align}
Note that the first stopping criterion of {\tt TSDE} ensures that $T_k \leq T_{k-1}+1$ for all $k$.
Because $J(\tilde\theta_{k}) \geq 0$, each term in the first summation satisfies
\begin{align}
\ee\Big[ \mathds{1}_{\{t_{k}\leq T\}}T_k J(\tilde\theta_{k})\Big]
\leq &\ee\Big[ \mathds{1}_{\{t_{k}\leq T\}}(T_{k-1}+1) J(\tilde\theta_{k})\Big].
\end{align}
Note that $\mathds{1}_{\{t_{k}\leq T\}}(T_{k-1}+1)$ is measurable with respect to $\sigma(h_{t_k})$. Then, Lemma \ref{lm:PS_expectation} gives
\begin{align}
&\ee\Big[ \mathds{1}_{\{t_{k}\leq T\}}(T_{k-1}+1) J(\tilde\theta_{k})\Big]
\notag\\
= &\ee\Big[\mathds{1}_{\{t_{k}\leq T\}}(T_{k-1}+1) J(\theta_{1}) \Big].
\end{align}

Combining the above equations, we get
\begin{align}
R_0 
\leq &\sum_{k=1}^{\infty}\ee\Big[\mathds{1}_{\{t_{k}\leq T\}}(T_{k-1}+1) J(\theta_{1}) \Big] - T\ee\Big[J(\theta_1) \Big]
\notag\\
= &\ee\Big[\sum_{k=1}^{K_T}(T_{k-1}+1) J(\theta_{1}) \Big] - T\ee\Big[J(\theta_1) \Big]
\notag\\
= &\ee\Big[K_T J(\theta_{1})\Big]
+ \ee\Big[\Big(\sum_{k=1}^{K_T}T_{k-1} - T\Big)J(\theta_{1})\Big]
\notag\\
\leq & M_J\ee\Big[K_T\Big]
\end{align}
where the last equality holds because $J(\theta_{1}) \leq M_J$ and $\sum_{k=1}^{K_T}T_{k-1} \leq T$.
\end{proof}

The term $R_1$ can be upper bounded using $K_T$ and $X_T$.
\begin{lemma}
\label{lm:R1}
 The second term $R_1$ is bounded by
\begin{align}
R_1 \leq M_S\ee\Big[K_T X_T^2 \Big].
\end{align}
\end{lemma}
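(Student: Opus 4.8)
The plan is to bound $R_1$ by exploiting the telescoping structure of the summand within each episode. For a fixed episode $k$ with $t_k \le t < t_{k+1}$, the gain matrix $S(\tilde\theta_k)$ is constant, so the inner sum
\begin{align}
\sum_{t=t_k}^{t_{k+1}-1} \Big[ x_t^\top S(\tilde\theta_k) x_t - x_{t+1}^\top S(\tilde\theta_k) x_{t+1} \Big]
\end{align}
telescopes to $x_{t_k}^\top S(\tilde\theta_k) x_{t_k} - x_{t_{k+1}}^\top S(\tilde\theta_k) x_{t_{k+1}}$. Since $S(\tilde\theta_k)$ is positive definite, the subtracted term is nonnegative and can be dropped, leaving the bound $x_{t_k}^\top S(\tilde\theta_k) x_{t_k}$ for each episode.

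Next I would take absolute values and apply the operator-norm bound $\|S(\tilde\theta_k)\| \le M_S$, which holds for all $\theta \in \Omega_1$ by the definition of $M_S$ stated before Theorem \ref{thm:regret}. This gives
\begin{align}
x_{t_k}^\top S(\tilde\theta_k) x_{t_k} \le M_S \lVert x_{t_k} \rVert^2 \le M_S X_T^2,
\end{align}
where the last step uses the definition $X_T = \max_{t \le T} \lVert x_t \rVert$, so that each episode's boundary state is dominated uniformly by $X_T^2$.

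Summing over the $K_T$ episodes then yields a per-sample-path bound of $M_S K_T X_T^2$, and taking expectations gives $R_1 \le M_S \ee[K_T X_T^2]$, the claimed inequality. I would be slightly careful about the random number of episodes: the outer sum runs to $K_T$, which is itself a random variable, so strictly I would write the telescoping argument inside the expectation using indicator functions $\mathds{1}_{\{t_k \le T\}}$ (as was done for $R_0$) to make the interchange of summation and expectation rigorous via monotone convergence, but this is routine.

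The main obstacle here is essentially bookkeeping rather than analysis: unlike the bound on $R_0$, this step does not require the posterior-sampling identity of Lemma \ref{lm:PS_expectation}, because the telescoping already eliminates the dependence on the sampled parameter up to the norm bound $M_S$. The only genuine care needed is ensuring the dropped terms $x_{t_{k+1}}^\top S(\tilde\theta_k) x_{t_{k+1}}$ are indeed nonnegative (guaranteed by positive definiteness of $S$) and that the boundary states across consecutive episodes are each bounded by the single quantity $X_T^2$, so that the sum collapses to $K_T$ copies of $M_S X_T^2$; the heavier lifting of controlling $\ee[K_T X_T^2]$ itself is deferred to the combination of Lemmas \ref{lm:bound_X} and \ref{lm:KT}.
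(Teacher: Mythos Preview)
Your proposal is correct and follows essentially the same approach as the paper's proof: telescope the inner sum within each episode, drop the nonnegative term $x_{t_{k+1}}^\top S(\tilde\theta_k) x_{t_{k+1}}$, apply the uniform bound $\|S(\tilde\theta_k)\|\le M_S$, and then bound each $\|x_{t_k}\|^2$ by $X_T^2$ before summing over the $K_T$ episodes. The paper's version is slightly terser and does not explicitly invoke indicator functions or monotone convergence, but the argument is otherwise identical.
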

\begin{proof}
From the definition of $R_1$ we get
\begin{align}
R_1 = &\ee\Big[\sum_{k=1}^{K_T}\sum_{t=t_k}^{t_{k+1}-1}  \Big[x^\top_t S(\tilde\theta_k) x_t - x_{t+1}^\top S(\tilde\theta_k) x_{t+1}\Big]\Big]
\notag\\
=&\ee\Big[\sum_{k=1}^{K_T}\Big[x^\top_{t_k} S(\tilde\theta_k) x_{t_k} - x_{t_{k+1}}^\top S(\tilde\theta_k) x_{t_{k+1}}\Big]\Big]
\notag\\
\leq &\ee\Big[\sum_{k=1}^{K_T}
x^\top_{t_k} S(\tilde\theta_k) x_{t_k}\Big].
\end{align}
Since $||S(\tilde\theta_{k})|| \leq M_S$, we obtain
\begin{align}
R_1 \leq &\ee\Big[\sum_{k=1}^{K_T} M_S ||x_{t_k}||^2 \Big]
%\leq &B_S\ee\Big[K_T \max_{t\leq T} ||x_{t}||^2 \Big]
\leq M_S\ee\Big[K_T X_T^2 \Big].
\end{align}
\end{proof}

We now derive an upper bound for $R_2$.
\begin{lemma}
\label{lm:R2}
The third term $R_2$ is bounded by
\begin{align}
R_2 \leq 
& O\Big( 
M_2
\sqrt{(T + \ee[K_T])\ee[X_T^4\log(T X_T^2)]}
\Big)
\end{align}
where $M_2 = M_SM_\theta M_G^2
\sqrt{\frac{32d^2n}{\lambda_{min}}}$ and $\lambda_{min}$ is the minimum eigenvalue of $\Sigma^{-1}_1$.
\end{lemma}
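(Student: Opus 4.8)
\emph{Overall approach.} The plan is to turn each summand of $R_2$ into a single bilinear form, peel off the benign factors, and then control the residual sum of weighted prediction errors by three Cauchy--Schwarz steps that are driven by the two stopping criteria of {\tt TSDE} and by the posterior-sampling identity of Lemma~\ref{lm:PS_expectation}. First I would use symmetry of $S(\tilde\theta_k)$ to write, with $a=\theta_1^\top z_t$ and $b=\tilde\theta_k^\top z_t$, the identity $a^\top S(\tilde\theta_k)a-b^\top S(\tilde\theta_k)b=\big((\theta_1-\tilde\theta_k)^\top z_t\big)^\top S(\tilde\theta_k)\big((\theta_1+\tilde\theta_k)^\top z_t\big)$. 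Bounding $\|S(\tilde\theta_k)\|\le M_S$, $\|\theta_1+\tilde\theta_k\|\le 2M_\theta$, and $\|z_t\|\le M_G\|x_t\|\le M_G X_T$ (since $z_t=[I;G_k]x_t$ and $\|[I,G_k^\top]\|\le M_G$), the summand is at most $2M_SM_\theta M_G X_T\,\|(\theta_1-\tilde\theta_k)^\top z_t\|$, so $R_2\le 2M_SM_\theta M_G\,\ee\big[X_T\sum_{k}\sum_{t}\|(\theta_1-\tilde\theta_k)^\top z_t\|\big]$. Everything reduces to controlling this double sum of prediction differences.

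\emph{Weighted norms and the doubling bridge.} Column-wise Cauchy--Schwarz in the $\Sigma_{t_k}$ inner product gives $\|(\theta_1-\tilde\theta_k)^\top z_t\|\le D_k\,\|z_t\|_{\Sigma_{t_k}}$, where $D_k^2:=\sum_{i=1}^n\|\theta_1(i)-\tilde\theta_k(i)\|_{\Sigma_{t_k}^{-1}}^2$ is constant across episode $k$. Here the second (doubling) criterion is essential: for $t$ inside episode $k$ one has $\det(\Sigma_t^{-1})\le 2\det(\Sigma_{t_k}^{-1})$, and since $\Sigma_t^{-1}=\Sigma_{t_k}^{-1}+\sum_s z_sz_s^\top\succeq\Sigma_{t_k}^{-1}$, the eigenvalues of $\Sigma_{t_k}^{1/2}\Sigma_t^{-1}\Sigma_{t_k}^{1/2}$ multiply to at most $2$, forcing the largest to be at most $2$; hence $\Sigma_t^{-1}\preceq 2\Sigma_{t_k}^{-1}$ and $\|z_t\|_{\Sigma_{t_k}}^2\le 2\|z_t\|_{\Sigma_t}^2$. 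Cauchy--Schwarz within the episode then gives $\sum_{t=t_k}^{t_{k+1}-1}\|(\theta_1-\tilde\theta_k)^\top z_t\|\le D_k\sqrt{T_k}\sqrt{P_k}$ with $P_k:=\sum_{t=t_k}^{t_{k+1}-1}\|z_t\|_{\Sigma_{t_k}}^2$.

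\emph{Separating the two scales.} A cross-episode Cauchy--Schwarz and then one on the probability space yield $R_2\le 2M_SM_\theta M_G\,\sqrt{\ee[\sum_k D_k^2 T_k]}\,\sqrt{\ee[X_T^2\sum_k P_k]}$. For the first factor I would invoke the first criterion $T_k\le T_{k-1}+1$ to replace $T_k$ by the $\sigma(h_{t_k})$-measurable weight $T_{k-1}+1$, then use Lemma~\ref{lm:PS_expectation} with the Gaussian identity $\ee[D_k^2\mid h_{t_k}]=2dn$ to get $\ee[\sum_k D_k^2 T_k]\le 2dn\,\ee[\sum_k(T_{k-1}+1)]\le 2dn\,(T+\ee[K_T])$, using $\sum_k T_{k-1}\le T$. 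For the second factor I would telescope $\sum_k P_k\le 2\sum_{t=1}^T\|z_t\|_{\Sigma_t}^2$ by the doubling bound above, apply the elliptical-potential identity $\sum_t\log(1+\|z_t\|_{\Sigma_t}^2)=\log\det\Sigma_{T+1}^{-1}-\log\det\Sigma_1^{-1}$ from Remark~\ref{rm:invcov}, and the scalar inequality $x\le\frac{B}{\log(1+B)}\log(1+x)$ for $x\le B:=M_G^2X_T^2/\lambda_{min}$; this contributes the remaining $d$ and the $\log(TX_T^2)$, giving $\ee[X_T^2\sum_k P_k]\le O\big(\tfrac{dM_G^2}{\lambda_{min}}\big)\ee[X_T^4\log(TX_T^2)]$. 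Multiplying the two factors reproduces $M_2=M_SM_\theta M_G^2\sqrt{32d^2n/\lambda_{min}}$ and the claimed bound, with Lemmas~\ref{lm:bound_X} and~\ref{lm:KT} supplying the final moments of $X_T$ and $K_T$.

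\emph{Main obstacle.} I expect the crux to be orchestrating the three Cauchy--Schwarz steps so that $D_k$, which depends on the random sample $\tilde\theta_k$, is ultimately paired only with the $\sigma(h_{t_k})$-measurable weight $T_{k-1}+1$; this is what lets the posterior-sampling property collapse $\ee[D_k^2\mid h_{t_k}]$ to the constant $2dn$. The difficulty is that $D_k$ and the potential $P_k$ are coupled through the very trajectory that the sampled gain $G_k$ generates, and it is precisely the cross-episode Cauchy--Schwarz that decouples them; using instead a per-episode or $\max_k P_k$ bound costs an extra $\sqrt{T}$ and ruins the rate. Two supporting points also need care: the passage $\det(\Sigma_t^{-1})\le 2\det(\Sigma_{t_k}^{-1})\Rightarrow\Sigma_t^{-1}\preceq 2\Sigma_{t_k}^{-1}$ that makes the doubling criterion usable as an operator bound, and the evaluation of $\ee[D_k^2\mid h_{t_k}]$, where the projection of the posterior onto the compact support $\Omega_1$ must be argued not to inflate the second moment beyond $2dn$.
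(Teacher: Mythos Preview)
Your proposal is correct and follows essentially the same route as the paper. The only organizational difference is a mirror-image choice of where to use the doubling criterion: the paper inserts $\Sigma_t^{\pm 1/2}$ in the inner product, applies a single Cauchy--Schwarz over $(t,\omega)$, and then converts $\|\Sigma_t^{-1/2}(\theta_1-\tilde\theta_k)\|$ to $\|\Sigma_{t_k}^{-1/2}(\theta_1-\tilde\theta_k)\|$ via the determinant ratio (their Lemma~\ref{lm:ineq_sigmatheta}, using Lemma~9 of \cite{abbasi2015bayesian}); you instead insert $\Sigma_{t_k}^{\pm 1/2}$, keep $D_k$ episode-constant from the start, and convert $\|z_t\|_{\Sigma_{t_k}}$ to $\|z_t\|_{\Sigma_t}$ in the potential factor. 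Both conversions rest on the same fact that $\det(\Sigma_t^{-1})\le 2\det(\Sigma_{t_k}^{-1})$ together with $\Sigma_t^{-1}\succeq\Sigma_{t_k}^{-1}$ forces $\Sigma_t^{-1}\preceq 2\Sigma_{t_k}^{-1}$. Your within-episode Cauchy--Schwarz is redundant (a single Cauchy--Schwarz over the full double sum already yields $\sqrt{\sum_k T_k D_k^2}\sqrt{\sum_k P_k}$), but harmless. Your flag about the projection onto $\Omega_1$ possibly inflating $\ee[D_k^2\mid h_{t_k}]$ beyond $2dn$ is a genuine subtlety that the paper's proof glosses over; it simply asserts the Gaussian computation as if no truncation were present.
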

\begin{proof}
Each term inside the expectation of $R_2$ is equal to
\begin{align}
& ||S^{0.5}(\tilde\theta_k) \theta_1^\top z_t ||^2-||S^{0.5}(\tilde\theta_k) \tilde\theta_k^\top z_t ||^2
\notag\\
=& \Big(||S^{0.5}(\tilde\theta_k)\theta_1^\top z_t ||+||S^{0.5}(\tilde\theta_k) \tilde\theta_k^\top z_t ||\Big)
\notag\\
&\quad \Big(||S^{0.5}(\tilde\theta_k) \theta_1^\top z_t ||-||S^{0.5}(\tilde\theta_k)\tilde\theta_k^\top z_t ||\Big)
\notag\\
\leq& \Big(||S^{0.5}(\tilde\theta_k) \theta_1^\top z_t ||+||S^{0.5}(\tilde\theta_k) \tilde\theta_k^\top z_t ||\Big)
\notag\\
&||S^{0.5}(\tilde\theta_k) (\theta_1-\tilde\theta_k)^\top z_t ||.
\end{align}
Since $||S^{0.5}(\tilde\theta_k) \theta^\top z_t|| \leq M_S^{0.5} M_\theta M_G X_T$ for $\theta = \tilde\theta_k$ or $\theta = \theta_1$, the above term can be further bounded by
\begin{align}
& 2M_S^{0.5}M_\theta M_G X_T ||S^{0.5}(\tilde\theta_k)(\theta_1-\tilde\theta_k )^\top z_t ||
\notag\\
\leq & 2M_S M_\theta M_G X_T ||(\theta_1-\tilde\theta_k )^\top z_t ||.
\end{align}
Therefore,
\begin{align}
R_2 \leq & 2M_SM_\theta M_G\ee\Big[ X_T\sum_{k=1}^{K_T}\sum_{t=t_k}^{t_{k+1}-1}  ||(\theta_1-\tilde\theta_k )^\top z_t || \Big].
\label{eq:boundR2_1}
\end{align}
From Cauchy-Schwarz inequality, we have
\begin{align}
&\ee\Big[X_T\sum_{k=1}^{K_T}\sum_{t=t_k}^{t_{k+1}-1}  ||(\theta_1-\tilde\theta_k )^\top z_t || \Big]
\notag\\
=
&\ee\Big[X_T\sum_{k=1}^{K_T}\sum_{t=t_k}^{t_{k+1}-1}  ||(\Sigma^{-0.5}_t(\theta_1-\tilde\theta_k ))^\top  \Sigma^{0.5}_tz_t || \Big]
\notag\\
\leq
&\ee\Big[\sum_{k=1}^{K_T}\sum_{t=t_k}^{t_{k+1}-1}  ||\Sigma^{-0.5}_t(\theta_1-\tilde\theta_k )|| \times  X_T||\Sigma^{0.5}_tz_t || \Big]
\notag\\
\leq 
&\sqrt{\ee\Big[\sum_{k=1}^{K_T}\sum_{t=t_k}^{t_{k+1}-1}  ||\Sigma^{-0.5}_t(\theta_1-\tilde\theta_k )||^2\Big]}
\notag\\
&\sqrt{\ee\Big[\sum_{k=1}^{K_T}\sum_{t=t_k}^{t_{k+1}-1}  X_T^2||\Sigma^{0.5}_t z_t||^2\Big]}
\label{eq:boundR2_2}
\end{align}

From Lemma \ref{lm:ineq_sigmatheta} in the appendix, the first part of \eqref{eq:boundR2_2} is bounded by
\begin{align}
&\sqrt{\ee\Big[\sum_{k=1}^{K_T}\sum_{t=t_k}^{t_{k+1}-1}  ||\Sigma^{-0.5}_t(\theta_1-\tilde\theta_k )||^2\Big]}
\notag\\
\leq & \sqrt{4dn (T + \ee[K_T])}.
\label{eq:boundR2_part1}
\end{align}

For the second part of \eqref{eq:boundR2_2}, note that
\begin{align}
 \sum_{k=1}^{K_T}\sum_{t=t_k}^{t_{k+1}-1}  ||\Sigma^{0.5}_t z_t||^2
=  \sum_{t=1}^T  z_t^\top \Sigma_t z_t
\end{align}
Since $||z_t|| \leq M_G X_T$ for all $t \leq T$, Lemma 8 of \cite{abbasi2015bayesian} implies
\begin{align}
&\sum_{t=1}^T  z_t^\top \Sigma_t z_t
\notag\\
\leq &\sum_{t=1}^T \max(1,M_G^2 X_T^2/\lambda_{min}) \min(1,z_t^\top \Sigma_t z_t)
\notag\\
\leq & 2d \max(1,M_G^2 X_T^2/\lambda_{min})
\log(\tr(\Sigma^{-1}_1)+T M_G^2 X_T^2)
\notag\\
= & O\Big( \frac{2d M_G^2}{\lambda_{min}} X_T^2\log(T X_T^2) \Big).
\end{align}
Consequently, the second term of \eqref{eq:boundR2_2} is bounded by
\begin{align}
&O\Big( \sqrt{ \frac{2d M_G^2 }{\lambda_{min}}\ee\Big[ X_T^4\log(T X_T^2) \Big]} \Big).
\label{eq:boundR2_part2}
\end{align}
Then, from \eqref{eq:boundR2_1}, \eqref{eq:boundR2_2}, \eqref{eq:boundR2_part1} and \eqref{eq:boundR2_part2}, we obtain the result of the lemma.
\end{proof}

Using the bounds on $R_0$, $R_1$ and $R_2$, we are now ready to prove Theorem \ref{thm:regret}.
\begin{proof}[Proof of Theorem \ref{thm:regret}]
From the regret decomposition \eqref{eq:regret_decomp}, Lemmas \ref{lm:R0}-\ref{lm:R2}, and the bound on $K_T$ from Lemma \ref{lm:KT}, we obtain
\begin{align}
&R(T,\text{{\tt TSDE}})
\notag\\
\leq &O\Big(
M_2 \sqrt{(T + \ee[ \sqrt{ 2d T\log(T X_T^2)} ])\ee[X_T^4\log(T X_T^2)]}
\notag\\
& +  \ee\Big[  \sqrt{ 2d T\log(T X_T^2)} (M_J+ M_S X_T^2) \Big]
\Big)
\notag\\
\leq &\tilde O\Big(
\sqrt{(T + \ee[ \sqrt{  T\log(X_T)} ])\ee[X_T^4\log(X_T)]}
\notag\\
& +  \ee\Big[  \sqrt{T\log(X_T)} X_T^2 \Big]
\Big).
\label{eq:regretboundinproof}
\end{align}
%Therefore, it remains to upper bound the expectations $\ee[\sqrt{\log(T X_T^2)}]$, $\ee[\sqrt{\log(T X_T^2)}X_T^2]$, and $\ee[X_T^4\log(T X_T^2)]$.
From Lemma \ref{lm:bounds_some_expectations} in the appendix, we have
$\ee[\sqrt{\log(X_T)}]\leq  \tilde O(1) $, 
$\ee[\sqrt{\log(X_T)}X_T^2]
\leq \tilde O\Big( (1-\delta)^{-2} \Big)$, and $\ee[X_T^4\log(X_T)]
\leq  \tilde O\Big((1-\delta)^{-5}  \Big)$.
Applying these bounds to \eqref{eq:regretboundinproof} we get
\begin{align}
&R(T,\text{{\tt TSDE}})
\notag\\
\leq & \tilde O\Big(
\sqrt{(T +  \sqrt{  T} ) (1-\rho)^{-5}}
 +   \sqrt{T}(1-\delta)^{-2}
\Big)
\notag\\
= & \tilde O\Big(
\sqrt{T}(1-\delta)^{-2.5}
\Big).
\end{align}
\end{proof}

\section{Regret Analysis for Time-Varying Parameter}\label{sec:timevarying}

We now present the regret analysis for the time-varying parameter case.

Since the true parameter varies over time, we make a stronger assumption on the prior distribution to ensure stability.
\begin{assumption}
\label{assum:stabilizability_TV}
There exists a positive number $\delta <1$ such that for any $\theta, \theta'\in \Omega_1$, we have
$\rho(A+ BG(\theta')) \leq \delta$ where $\theta^\top = [A,B]$.
\end{assumption}

From the assumption, the closed-loop system is stable for any model parameter under the learning algorithm.

Let $N_T = \ee\Big[\sum_{t\leq T} j_t\Big]$ be the expected number of jumps up to time $T$.
We now present the bound on expected regret of {\tt TSDE} in the time-varying parameter case.
\begin{theorem}
\label{thm:regret_TV}
Under Assumptions \ref{assum:prior} and \ref{assum:stabilizability_TV}, the expected regret \eqref{eq:regret_TV} of {\tt TSDE-TV} satisfies
\begin{align}
R_{TV}(T,\text{{\tt TSDE-TV}}) \leq \tilde O\Big(
T^{\frac{2+q}{2+2q}} + T^{\frac{q}{1+q}}N_T
\Big)
\end{align}
where $\tilde O(\cdot)$ hides all constants and logarithmic factors.
\end{theorem}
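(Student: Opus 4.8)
The plan is to exploit the re-initialization schedule of {\tt TSDE-TV} to reduce the time-varying problem to a sequence of stationary ones. First I would partition the horizon $[1,T]$ into the intervals delimited by the re-initialization times $s_1 < s_2 < \cdots$, calling the $l$th interval a \emph{segment}. Since the algorithm re-initializes as soon as $t \geq s_l + l^q$, the $l$th segment has length at most $l^q$; hence if $L$ denotes the number of segments within $[1,T]$, then $\sum_{l=1}^{L} l^q \geq T$ forces $L = O(T^{1/(1+q)})$, and every segment has length at most $L^q = O(T^{q/(1+q)})$. These two counting facts supply the exponents in the statement. I would then split the regret as $R_{TV} = R^{\text{nj}} + R^{\text{j}}$, where $R^{\text{nj}}$ gathers the contributions of segments containing no parameter jump and $R^{\text{j}}$ those of segments in which at least one jump occurs; the number of jump-containing segments is at most the total number of jumps, whose expectation is $N_T$.

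For $R^{\text{nj}}$, the key observation is that a no-jump segment is \emph{exactly} a fresh run of {\tt TSDE}: at $s_l$ the posterior is reset to $\mu_1$, the true parameter is constant throughout the segment and was itself drawn from $\mu_1$, and the dynamic episodes proceed unchanged until the re-initialization clock truncates the segment at horizon $l^q$. I would therefore invoke Theorem \ref{thm:regret} on each such segment, giving a per-segment bound of $\tilde O(\sqrt{l^q}) = \tilde O(l^{q/2})$. Summing over segments, $R^{\text{nj}} \leq \sum_{l=1}^{L} \tilde O(l^{q/2}) = \tilde O(L^{1+q/2}) = \tilde O(T^{(2+q)/(2+2q)})$, which is the first term.

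For $R^{\text{j}}$, I would use the stronger stability Assumption \ref{assum:stabilizability_TV}: because $\rho(A + BG(\theta')) \leq \delta$ for \emph{every} pair $\theta,\theta' \in \Omega_1$, the closed loop stays uniformly stable even immediately after a jump, so the state norm bound of Lemma \ref{lm:bound_X} holds conditionally on any realization of the jump process and the per-step regret is $\tilde O(X_T^2)$ with only logarithmic conditional expectation. Hence the regret of any jump-containing segment is at most its length times $\tilde O(X_T^2)$, i.e. $\tilde O(L^q)$ in expectation. Taking expectations over the jump process (independent of the noise) and bounding the number of jump-containing segments by $N_T$ gives $R^{\text{j}} \leq \tilde O(T^{q/(1+q)} N_T)$, the second term, and combining the two bounds yields the theorem.

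The step I expect to be the main obstacle is making the ``fresh run of {\tt TSDE}'' reduction for $R^{\text{nj}}$ fully rigorous. Theorem \ref{thm:regret} was proved under the assumption $x_1 = 0$, whereas a segment inherits the nonzero (though bounded) state left by the previous segment; I would need to re-examine Lemmas \ref{lm:bound_X}--\ref{lm:KT} and the decomposition of $R_1$ with a bounded initial condition, and confirm that truncating the segment by the re-initialization clock does not break the stopping-time property used in Lemma \ref{lm:PS_expectation}. Both reductions lean on the uniform stability furnished by Assumption \ref{assum:stabilizability_TV}, which is precisely why the stronger assumption replaces Assumption \ref{assum:stabilizability} here.
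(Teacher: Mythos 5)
Your proposal follows essentially the same route as the paper: the same decomposition into phases delimited by the re-initialization times, the same per-phase invocation of Theorem \ref{thm:regret} on jump-free phases (giving $\tilde O(\sqrt{l^q})$ each and $\tilde O(L_T^{(2+q)/2})$ in total), the same $\tilde O(X_T^2)$-per-step bound combined with the phase length $\tilde O(L_T^q)$ and the expected jump count $N_T$ for phases containing a jump, and the same count $L_T = O(T^{1/(1+q)})$ from $\sum_{l\le L_T} l^q \le T$ (note your inequality should point this way). The initial-state and stopping-time caveat you flag at the end is a legitimate concern, but the paper's own proof leaves it equally unaddressed, simply asserting that the stationary analysis applies within each phase.
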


From Theorem \ref{thm:regret_TV}, we have the following corollary.
\begin{corollary}\label{corrolary:TV parameter}
If $N_T \leq T^\alpha$ for some $\alpha <1$, we can pick $q = \frac{2(1-\alpha)}{1+2\alpha}$ such that
\begin{align}
R_{TV}(T,\text{{\tt TSDE-TV}}) \leq \tilde O\Big( T^{\frac{2+\alpha}{3}}
\Big).
\end{align}
\end{corollary}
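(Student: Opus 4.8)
The plan is to obtain Corollary \ref{corrolary:TV parameter} as a one-parameter optimization of the free exponent $q$ in the bound of Theorem \ref{thm:regret_TV}, which I would use as a black box. First I would substitute the hypothesis $N_T \leq T^\alpha$ into the theorem's bound. Since $T^{\frac{q}{1+q}} N_T \leq T^{\frac{q}{1+q}+\alpha}$, the regret is controlled by a sum of two pure powers of $T$,
\begin{align}
R_{TV}(T,\text{{\tt TSDE-TV}}) \leq \tilde O\Big( T^{e_1(q)} + T^{e_2(q)} \Big),
\end{align}
with $e_1(q) = \frac{2+q}{2+2q}$ and $e_2(q) = \frac{q}{1+q} + \alpha$. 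The growth rate in $T$ is governed by $\max(e_1(q),e_2(q))$, so the task reduces to choosing $q$ that minimizes this maximum.

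The key structural fact is that the two exponents move in opposite directions. Writing $e_1(q) = \frac12 + \frac{1}{2(1+q)}$ shows that $e_1$ is strictly decreasing in $q$, from $1$ at $q=0$ down toward $\frac12$; writing $e_2(q) = (1+\alpha) - \frac{1}{1+q}$ shows that $e_2$ is strictly increasing, from $\alpha$ up toward $1+\alpha$. Hence their maximum is minimized exactly where the two curves cross, and I would locate the optimal $q$ by solving $e_1(q) = e_2(q)$. Clearing the common denominator $2(1+q)$ turns this into the linear equation $2 - 2\alpha = (1+2\alpha)q$, whose solution is precisely the stated $q = \frac{2(1-\alpha)}{1+2\alpha}$.

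It then remains to evaluate the common exponent at this $q$. Substituting gives $1+q = \frac{3}{1+2\alpha}$ and $2+q = \frac{2(2+\alpha)}{1+2\alpha}$, so that $e_1(q) = \frac{2+q}{2(1+q)} = \frac{2+\alpha}{3}$; a parallel computation confirms $e_2(q) = \frac{2(1-\alpha)}{3} + \alpha = \frac{2+\alpha}{3}$ as well. Both terms therefore scale as $T^{(2+\alpha)/3}$, yielding the claimed $\tilde O\big(T^{(2+\alpha)/3}\big)$ bound. I expect essentially no obstacle beyond checking that the chosen $q$ is an admissible (positive) algorithm parameter, which holds since $\alpha < 1$ forces $1-\alpha>0$ while $1+2\alpha>0$ over the natural range $\alpha \in [0,1)$ dictated by $N_T \geq 0$; all the genuine difficulty lives in Theorem \ref{thm:regret_TV}, which this corollary merely specializes.
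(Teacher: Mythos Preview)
Your proposal is correct and is exactly the intended derivation: the paper states the corollary without proof, leaving implicit precisely the substitution of $N_T \leq T^\alpha$ into Theorem~\ref{thm:regret_TV} and the balancing of the two exponents $\frac{2+q}{2+2q}$ and $\frac{q}{1+q}+\alpha$ that you carry out. Your algebra and the check that $q>0$ for $\alpha\in[0,1)$ are accurate, so nothing further is needed.
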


The corollary says that when the expected number of jumps is sub-linear in $T$, with an appropriate choice of algorithm parameter $q$, the {\tt TSDE-TV} algorithm  can achieve a sub-linear growth of expected regret.

\begin{remark}
Note that the sub-linear regret growth of {\tt TSDE-TV} implies its asymptotically optimal performance under the average cost criterion.
\end{remark}

We proceed to analyze the regret of {\tt TSDE-TV} and prove Theorem \ref{thm:regret_TV}.

Let $L_T$ be the number of re-initializations of {\tt TSDE-TV} upto time $T$.
Then, we can divide the time horizon $T$ into $L_T+1$ phases. Let $s_1, s_2,\dots,s_{L_T}$ be the times {\tt TSDE-TV} re-initializes. Then the $l$th phase has length $D_l = s_l - s_{l-1}$ for $l=1,2,\dots,(L_T+1)$ (with the convention $s_0 = 1$ and $s_{L_T+1} = T+1$).  
From the specification of {\tt TSDE-TV} we have $D_l = l^q$ for $l \leq L_T$.
%Therefore, the length of each phase time difference from any $t<T$ to its previous re-initialization in {\tt TSDE-TV} is less than $D_{L_T+1} = (L_T+1)^q$.

We can now decompose the regret of {\tt TSDE-TV} into two parts. The first part is the performance loss during the phases without parameter jumps, and the second part is the loss during the phases with at least one parameter jump. Specifically, we have
\begin{align}
R_{TV}(T,\text{{\tt TSDE-TV}}) = R_{TV,0} + R_{TV,1}
\end{align}
where
\begin{align}
& R_{TV,0} = \ee\Big[ \sum_{l\leq L_T+1: j_t = 0 \text{ for all } s_{l-1}\leq t < s_l}  \sum_{t = s_{l-1}}^{s_{l}-1}  \Big[ c_t  - J(\theta_t) \Big] \Big]
\\
& R_{TV,1} = \ee\Big[ \sum_{l\leq L_T+1: j_t = 1 \text{ for some } s_{l-1}\leq t < s_l}  \sum_{t = s_{l-1}}^{s_{l}-1}  \Big[ c_t  - J(\theta_t) \Big] \Big]
\end{align}

We  use the stationary parameter result to bound the first part $R_{TV,0}$.
\begin{lemma}
\label{lm:RTV0}
\begin{align}
R_{TV,0} \leq \tilde O( L_T^{\frac{2+q}{2}}).
%\tilde O\Big(T^{\frac{2+q}{2+2q}} \Big).
\end{align}
\end{lemma}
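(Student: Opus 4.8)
The plan is to reduce the no-jump regret $R_{TV,0}$ to the stationary regret bound of Theorem \ref{thm:regret}, applied phase-by-phase. The crucial observation is that within any phase $l$ in which no jump occurs (i.e.\ $j_t = 0$ for all $s_{l-1} \le t < s_l$), the true parameter $\theta_t$ equals a single fixed value $\theta$ throughout the phase. Moreover, because {\tt TSDE-TV} re-initializes the posterior statistics $\hat\theta$ and $\Sigma$ to their prior values $\hat\theta_1,\Sigma_1$ at time $s_{l-1}$, the algorithm's evolution over $[s_{l-1},s_l)$ is precisely that of {\tt TSDE} run from scratch on a system with fixed parameter $\theta$, over a horizon of length $D_l = s_l - s_{l-1}$. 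Since every jump regenerates the parameter from $\mu_1$ (and the jump process is independent of the noise), the fixed parameter governing the phase is distributed according to the prior $\mu_1$ and is unaffected by conditioning on the no-jump event, so the Bayesian averaging underlying Theorem \ref{thm:regret} applies verbatim with the prior of Assumption \ref{assum:prior}.

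First I would invoke Assumption \ref{assum:stabilizability_TV}, which guarantees $\rho(A + B G(\theta')) \le \delta$ for every pair $\theta,\theta'\in\Omega_1$ with $\theta^\top=[A,B]$, to conclude that the closed loop stays contractive within the phase even though the sampled gain $G_k$ is computed from a parameter different from the true $\theta$. This is the stronger stability that the state-norm bound of Lemma \ref{lm:bound_X} and the episode-count bound of Lemma \ref{lm:KT} require. Hence the per-phase analogues of Lemmas \ref{lm:bound_X}--\ref{lm:R2} hold, and they yield a per-phase contribution of order $\tilde O(\sqrt{D_l})$, matching the stationary bound $\tilde O(\sqrt{T})$ of Theorem \ref{thm:regret} with $T$ replaced by $D_l$ (the constant $(1-\delta)^{-2.5}$ factors out and is absorbed into $\tilde O$).

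Next, using $D_l = l^q$ for $l \le L_T$ and that there are at most $L_T+1$ phases in total, I would sum the per-phase bounds:
\begin{align}
R_{TV,0} \le \sum_{l=1}^{L_T+1} \tilde O\big(\sqrt{D_l}\,\big) = \tilde O\Big( \sum_{l=1}^{L_T+1} l^{q/2} \Big) = \tilde O\big( L_T^{\frac{2+q}{2}} \big),
\end{align}
where the last step uses $\sum_{l=1}^{L_T+1} l^{q/2} = O\big((L_T+1)^{\frac{q}{2}+1}\big) = O\big(L_T^{\frac{2+q}{2}}\big)$ by comparison with $\int_0^{L_T+1} x^{q/2}\,dx$; the final (truncated) phase has length at most $(L_T+1)^q$ and is absorbed by extending the sum to $l=L_T+1$.

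The main obstacle is making this per-phase reduction fully rigorous, and in particular handling the nonzero initial state $x_{s_{l-1}}$ at the start of each phase: the stationary analysis took $x_1 = 0$, whereas here each phase inherits the terminal state of the previous one. Because Assumption \ref{assum:stabilizability_TV} keeps every closed loop contractive with rate $\delta$, the influence of a nonzero initial condition decays geometrically and contributes only a bounded additive term to the state moments, so the bounds $\ee[X_T^j] \le O(\log(T)(1-\delta)^{-j})$ used inside Lemmas \ref{lm:R1} and \ref{lm:R2} continue to hold uniformly across phases with at most constant overhead. A secondary care point is that $L_T$ and the identity of the no-jump phases are random; since the per-phase bound $\tilde O(\sqrt{D_l})$ holds deterministically in $D_l = l^q$ and is summed only over a subset of the $L_T+1$ phases, the displayed summation remains valid inside the outer expectation defining $R_{TV,0}$.
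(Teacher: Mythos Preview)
Your proposal is correct and follows essentially the same approach as the paper: reduce each no-jump phase to the stationary setting via Theorem~\ref{thm:regret}, obtain a per-phase bound $\tilde O(\sqrt{D_l})$, and sum $\sum_{l\le L_T+1} l^{q/2} = \tilde O(L_T^{(2+q)/2})$. You are in fact more careful than the paper about the nonzero initial state at the start of each phase and about the conditioning needed for the Bayesian averaging, both of which the paper passes over silently.
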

\begin{proof}
Since the model parameter remains the same for each phase $l$ in $R_{TV,0}$, the system during each of such phase is the same as the stationary parameter case. Therefore, the regret analysis for TSDE can be applied here because {\tt TSDE-TV} is the same as TSDE during a phase. Note that the length of phase $l$ is $D_l$, so from Theorem \ref{thm:regret} we get
\begin{align}
R_{TV,0} = &\ee\Big[ \sum_{l\leq L_T+1: j_t = 0 \text{ for all } s_{l-1}\leq t < s_l}  
\notag\\
& \ee\Big[\sum_{t = s_{l-1}}^{s_{l}-1}  \Big[ c_t  - J(\theta_t) \Big]|j_t = 0 \text{ for all } s_{l-1}\leq t < s_l\Big] \Big]
\notag\\
\leq &\ee\Big[ \sum_{l\leq L_T+1: j_t = 0 \text{ for all } s_{l-1}\leq t < s_l}  
\tilde O(\sqrt{D_l}) \Big]
\notag\\
\leq & \tilde O( \sum_{l\leq L_T+1} \sqrt{D_l})
\notag\\
\leq & \tilde O( \sum_{l\leq L_T+1} l^{\frac{q}{2}})
= \tilde O( L_T^{\frac{2+q}{2}}).
\label{eq:pfinRTV0}
\end{align}
%Since $T \geq \sum_{l \leq L_T} D_l = \sum_{l \leq L_T} l^q =  \tilde O(L_T^{1+q})$, we have $L_T \leq \tilde O( T^{\frac{1}{1+q}})$.
%Substituting this bound for $L_T$ into \eqref{eq:pfinRTV0} we obtain the claim of this lemma.
\end{proof}

Using the expected number of parameter jumps $N_T$ we bound the second part of the regret in the lemma below.
\begin{lemma}
\label{lm:RTV1}
\begin{align}
R_{TV,1} \leq \tilde O(L_T^q N_T).
%\tilde O\Big(T^{\frac{q}{1+q}}N_T \Big).
\end{align}
\end{lemma}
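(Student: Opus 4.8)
The plan is to bound the per-phase cost crudely and then count how many phases can be contaminated by a jump. First I would drop the $-J(\theta_t)$ term using $J(\theta_t)=\tr(S(\theta_t))\geq 0$, so that within any phase the accumulated regret is at most $\sum_t c_t$. Since $u_t = G_k x_t$ during each episode and $\|[I,G(\theta)^\top]\|\le M_G$, the instantaneous cost obeys $c_t = x_t^\top Q x_t + u_t^\top R u_t \le O(X_T^2)$ with a constant depending only on $\|Q\|$, $\|R\|$, and $M_G$. Thus each contaminated phase $l$ contributes at most $D_l\cdot O(X_T^2)$ to $R_{TV,1}$.

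Next I would control the two remaining factors: the phase length and the number of contaminated phases. Because the re-initialization schedule of {\tt TSDE-TV} depends only on the time index (re-initialize once $t\ge s_l+l^q$) and not on the observations or the jumps, the phase boundaries $s_l$, the phase lengths $D_l = l^q$, and $L_T$ are all deterministic functions of $T$; in particular every phase satisfies $D_l \le (L_T+1)^q = O(L_T^q)$. For the count, observe that a phase enters the sum defining $R_{TV,1}$ only if it contains at least one jump, so the number of contaminated phases is at most the total number of jumps $\sum_{t\le T} j_t$ (map each contaminated phase injectively to one of its jumps). Combining these observations gives
\begin{align}
R_{TV,1} \le O(L_T^q)\, \ee\Big[ \Big(\sum_{t\le T} j_t\Big) X_T^2 \Big].
\end{align}

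It remains to decouple $X_T^2$ from the jump count. Here I would condition on the entire jump process: since $\sum_{t\le T} j_t$ is measurable with respect to $(j_t)$,
\begin{align}
\ee\Big[ \Big(\sum_{t\le T} j_t\Big) X_T^2 \Big]
= \ee\Big[ \Big(\sum_{t\le T} j_t\Big)\, \ee\big[ X_T^2 \,\big|\, (j_t)\big] \Big].
\end{align}
The key point is that Assumption \ref{assum:stabilizability_TV} guarantees $\rho(A+BG(\theta'))\le\delta$ for \emph{every} realized true parameter and every sampled gain, so the closed loop is contracting uniformly in the realized parameter sequence. Repeating the argument of Lemma \ref{lm:bound_X} conditionally then yields $\ee[X_T^2\mid (j_t)]\le \tilde O(1)$ uniformly over all jump realizations, so that
\begin{align}
\ee\Big[ \Big(\sum_{t\le T} j_t\Big) X_T^2 \Big] \le \tilde O(1)\, \ee\Big[\sum_{t\le T} j_t\Big] = \tilde O(1)\, N_T,
\end{align}
and hence $R_{TV,1}\le \tilde O(L_T^q N_T)$.

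The main obstacle is the last step: the state norm $X_T$ and the jump count are statistically dependent (the jumps reshuffle the true parameter, which in turn drives the state), so they cannot simply be factored through independence. The conditioning device works only because the \emph{stronger} stabilizability Assumption \ref{assum:stabilizability_TV} makes the uniform-in-realization state bound $\ee[X_T^2\mid (j_t)]\le\tilde O(1)$ available; verifying that the proof of Lemma \ref{lm:bound_X} carries over to an arbitrary sequence of true parameters and sampled gains is the part that needs the most care.
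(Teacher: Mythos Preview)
Your proposal is correct and follows essentially the same route as the paper: drop $-J(\theta_t)\ge 0$, bound $c_t\le O(X_T^2)$, use $D_l\le (L_T+1)^q$ and the injection from contaminated phases into jumps to reach $O(L_T^q)\,\ee[(\sum_{t\le T}j_t)X_T^2]$, and then decouple the two factors by conditioning and invoking Assumption \ref{assum:stabilizability_TV} to rerun the Lemma \ref{lm:bound_X} argument. The only minor difference is that you condition on the full jump process $(j_t)$ whereas the paper conditions on the jump count $\sum_{t\le T}j_t$; your choice is actually the cleaner one, since the state trajectory depends on when the jumps occur, not just how many there are, and the uniform contraction from Assumption \ref{assum:stabilizability_TV} is what makes the conditional bound $\ee[X_T^2\mid(j_t)]\le\tilde O(1)$ hold regardless.
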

\begin{proof}
Note that for any $t\leq T$ we have
\begin{align}
& c_t - J(\theta_t) \leq c_t = x_t^\top Q x_t + u_t^\top R u_t
\notag\\
\leq & ||Q|| \cdot ||x_t||^2  + \max_{\theta \in \Omega_1}||R|| \cdot ||G(\theta)|| \cdot ||x_t||^2
\notag\\
\leq & \tilde O ( X_T^2).
\end{align}
Therefore, 
\begin{align}
 R_{TV,1} \leq &\ee\Big[ \sum_{l\leq L_T+1: j_t = 1 \text{ for some } s_{l-1}\leq t < s_l}  D_l \tilde O ( X_T^2) \Big]
 \notag\\
 \leq & (L_T+1)^q\ee\Big[ \sum_{t \leq T}  j_t  \tilde O ( X_T^2) \Big]
 \label{eq:pfinRTV1_1}
\end{align}
where the last inequality holds because $D_l \leq (L_T+1)^q$.

Since the jump process is independent of the system noise, the proof of Lemma \ref{lm:bound_X} also hold when conditioned on $\sum_{t \leq T}  j_t$. That is,
\begin{align}
\ee\Big[ X_T^2| \sum_{t \leq T}  j_t\Big] \leq \tilde O(1).
 \label{eq:pfinRTV1_2}
\end{align}
From \eqref{eq:pfinRTV1_1} and \eqref{eq:pfinRTV1_1} we get
\begin{align}
 R_{TV,1} \leq &(L_T+1)^q \ee\Big[ \sum_{t \leq T}  j_t \tilde O ( X_T^2) \Big]
 \notag\\
 = &(L_T+1)^q \ee\Big[ \sum_{t \leq T}j_t \ee\Big[ \tilde O ( X_T^2)|\sum_{t \leq T}j_t\Big] \Big]
 \notag\\
\leq &(L_T+1)^q \ee\Big[ \sum_{t \leq T}j_t \Big]
= \tilde O(L_T^q N_T).
\end{align}

\end{proof}

We now prove Theorem \ref{thm:regret_TV} using the above lemmas.
\begin{proof}[Proof of Theorem \ref{thm:regret_TV}]
From Lemmas \ref{lm:RTV0} and \ref{lm:RTV1} we have
\begin{align}
R_{TV}(T,\text{{\tt TSDE-TV}}) \leq \tilde O\Big(
L_T^{\frac{2+q}{2}}+ L_T^q N_T
\Big).
\label{eq:inpfthmTV}
\end{align}

Note that 
\begin{align}
T \geq \sum_{l \leq L_T} D_l = \sum_{l \leq L_T} l^q = \tilde O(L_T^{1+q}).
\end{align}
So $L_T \leq \tilde O(T^{\frac{1}{1+q}})$, and the proof of theorem is complete by applying this bound on $L_T$ in \eqref{eq:inpfthmTV}.
\end{proof}

\section{Simulations}
\subsection{Stationary Parameter}

In this section, we illustrate through numerical simulations the performance of the {\tt TSDE} algorithm for different linear systems.
The prior distribution used in {\tt TSDE} are set according to \eqref{eq:mu1} with $\hat \theta_1(i) = 1$, $\Sigma_1 = I$, and $\Omega_1 = \{\theta: \rho(A_1+ B_1 G(\theta)) \leq \delta \}$ where $\delta$ is a simulation parameter. 
The parameter $\delta$ can be seen as the level of accuracy of the prior distribution. The smaller $\delta$ is, the more accurate the prior distribution is for the true system parameters.
Note that Assumption \ref{assum:stabilizability} holds when $\delta< 1$, but it does not hold when $\delta \geq 1$.

For each system, we select $\delta =0.99$ and $\delta=2$. We run $500$ simulations and show the mean of regret with confidence interval for each scenario.

In the case of a scalar system, we consider two systems: a stable system with $A_1 = 0.9$ and an unstable system with $A_1 = 1.5$. 
We set $Q=2$, $R=1$ and $ B_1 = 0.5$ for both cases. Figure \ref{fig_scalar}(\subref{fig_stable_scalar}) shows the results for the stable system and figure \ref{fig_scalar}(\subref{fig_unstable_scalar}) for the unstable system. 
{\tt TSDE} successfully learns and controls both the stable and the unstable system as the regret grows at a sublinear rate (though not apparent, it grows as $\tilde{O}(\sqrt{T})$).
Although Assumptions \ref{assum:stabilizability} does not hold when $\delta=2$, the results show that {\tt TSDE} might still work in this situation.

\begin{figure}[t]
	\centering        		
	\begin{subfigure}[b]{0.45\textwidth}
		\centering
		\includegraphics[width=\textwidth]{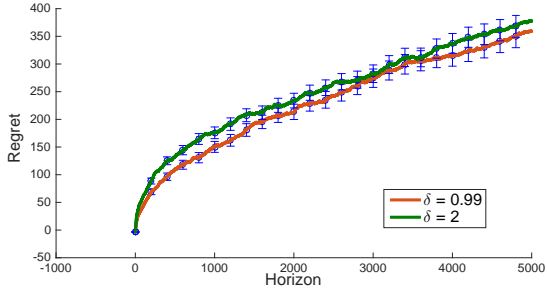}
		\caption{\small{Expected regret for a stable scalar system}}
		\label{fig_stable_scalar}
	\end{subfigure} 
	\begin{subfigure}[b]{0.45\textwidth}
		\centering
		\includegraphics[width=\textwidth]{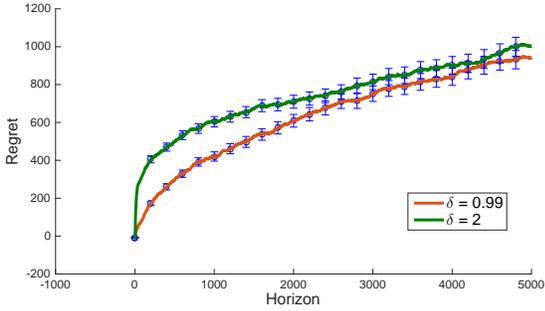}
		\caption{\small{Expected Regret for a unstable scalar system}}
		\label{fig_unstable_scalar}
	\end{subfigure}  
	\caption{\small{Scaler Systems}}	
	\label{fig_scalar}
\end{figure}

Figure \ref{fig_vector} illustrates the regret curves for a multi-dimensional system with $n=m=3$. We again consider two systems: a stable system with $0.9$ as the largest eignevalue of $A_1$ and an unstable system with $1.5$ as the largest eigenvalue of $A_1$.  The results show that {\tt TSDE} achieves sublinear regret in the multi-dimensional cases also. In fact, it can be verified that the rate of growth matches with the theoretical rate of theorem \ref{thm:regret}, $\tilde{O}(\sqrt{T})$. 

\begin{figure}[h]
\centering        		
    \begin{subfigure}[t]{0.48\textwidth}
        \centering
        \includegraphics[width=\textwidth]{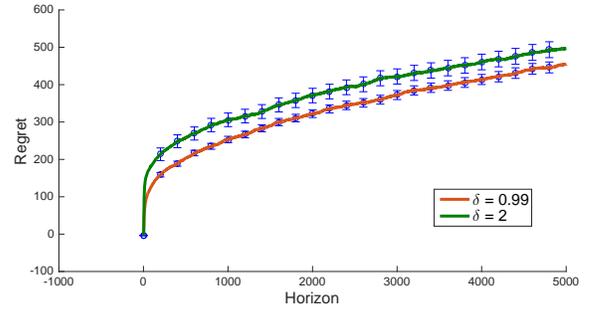}
        \caption{\small{Expected regret for a stable vector system with $n=m=3$}}
        \label{fig_stable_vector}
    \end{subfigure}    	
    \begin{subfigure}[t]{0.48\textwidth}
        \centering
        \includegraphics[width=\textwidth]{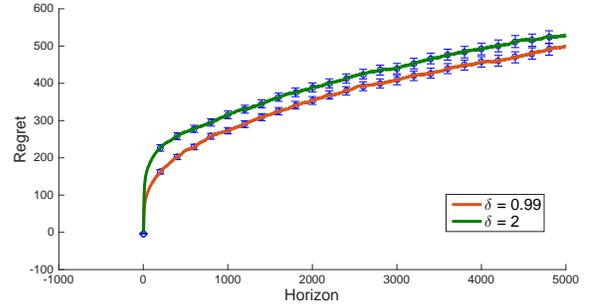}
        \caption{\small{Expected Regret for a unstable vector system with $n=m=3$}}
        \label{fig_unstable_vector}
    \end{subfigure}  
\caption{\small{Multi-Dimensional Systems}}	
\label{fig_vector}
\end{figure}

\subsection{Time-Varying Parameter}

Now, we present simulation results for the case when the true parameter is non stationary. We set the prior distribution of the true paramter to $N(\hat{\theta}_1, \Sigma_1)$ projected on the set $\Omega_1 = \{\theta : ||\theta - \theta_{prior}|| < \epsilon \}$ where $\theta_{prior} = [A_{prior}, B_{prior}]$ can be interpreted as the prior belief around which $\theta_t$ takes values. When $\epsilon$ is small, then all the $\theta_t$'s are close to each other. Assumption  \ref{assum:stabilizability_TV} is satisfied when $\epsilon$ is small due to the continuity of the spectral radius. 
%and the closed loop gain for the sampled parameter is stable. 

We run the simulations for a horizon of $T = 50000$ and set $\alpha = 0.2$.  $\theta_t$ is generated randomly from its prior distribution at each change point. The number of jumps in $\theta_t$ are fixed to $\lfloor T^{\alpha} \rfloor$ and the change points are distributed uniformly across the horizon for the purpose of simulation. We plot the mean of the regret with its confidence interval over $200$ simulation runs for  $\epsilon = 0.5$ and $\epsilon = 0.8$ in figure \ref{fig_parameter_change_scalar}.

 We set $\hat{\theta}_1 = \theta_{prior}$, $\Sigma_1 = \frac{1}{100} I$, and $q$ as in Corollary \ref{corrolary:TV parameter}. Figure \ref{fig_parameter_change_scalar}(\subref{regret}) shows the regret curve for a scalar system with  $A_{prior} = 1$ and  $B_{prior} = 0.5$. It can be observed that the higher value of $\epsilon$ results in higher regret. This is because smaller $\Omega_1$ would mean less variation among the $\theta_t$ when it changes. Smaller jumps in $\theta_t$ would imply less accumulated regret. 
 
Figure \ref{fig_parameter_change_scalar}(\subref{per_unit_regret}) shows the behaviour of regret per unit time $\frac{R(T)}{T}$. The curve can be seen decreasing to $0$ for both $\epsilon = 0.5, 0.8$ which shows that {\tt TSDE-TV} achieves sub-linearly growing cumulative regret. 
Also, it can be verified that the rate of growth is consistent with the theoretical limit as given in corollary \ref{corrolary:TV parameter}

Figure \ref{fig_parameter_change_vector} shows the analogous results for multi-dimensional system with $n=m=3$. In this case the eigen values of $A_{prior}$ are set to $1,0.7$ and $-0.2$. We can observe a similar behaviour in the regret curves and regret per unit time curves as in the scalar case. This establishes that {\tt TSDE-TV} achieves sub-linear cumulative regret in the multi-dimensional case also. 

 \begin{figure}[t]
 	\centering        		
 	\begin{subfigure}[t]{0.48\textwidth}
 		\centering
 		\includegraphics[width=\textwidth]{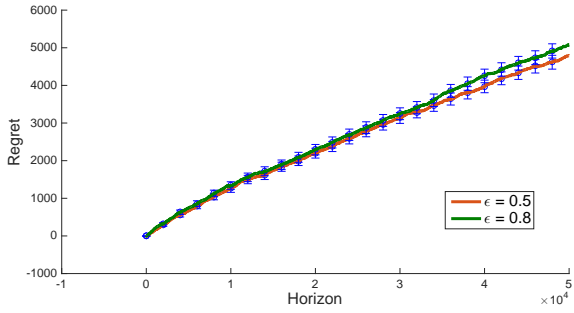}
 		\caption{\small{Regret for a TV scalar system}}
 		\label{regret}
 	\end{subfigure}    	
 	\begin{subfigure}[t]{0.48\textwidth}
 		\centering
 		\includegraphics[width=\textwidth]{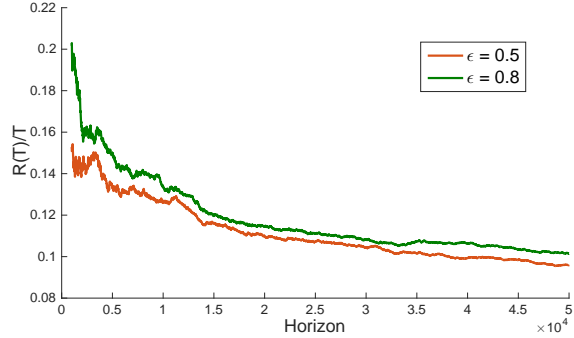}
 		\caption{\small{Regret per unit time for a TV scalar system}}
 		\label{per_unit_regret}
 	\end{subfigure}  
 	\caption{\small{Scalar TV system}}	
 	\label{fig_parameter_change_scalar}
 \end{figure}
 
 \begin{figure}[t]
 	\centering        		
 	\begin{subfigure}[t]{0.48\textwidth}
 		\centering
 		\includegraphics[width=\textwidth]{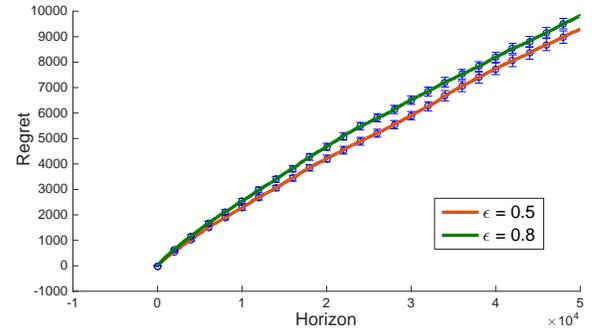}
 		\caption{\small{Regret for a TV vector system}}
 		\label{regret}
 	\end{subfigure}    	
 	\begin{subfigure}[t]{0.48\textwidth}
 		\centering
 		\includegraphics[width=\textwidth]{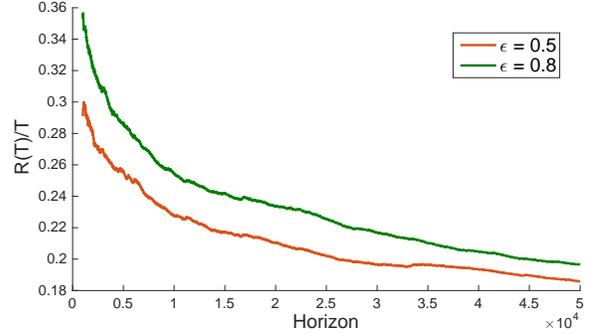}
 		\caption{\small{Regret per unit time for a TV vector system}}
 		\label{per_unit_regret}
 	\end{subfigure}  
 	\caption{\small{Multi-dimensional TV system}}	
 	\label{fig_parameter_change_vector}
 \end{figure}

%We  also note the jumps in the regret curves at times when the parameter changes its value. This is due to the fact that initially the algorithm is still applying the control according to the previously learned value of $\theta_t$. Also, the jumps are less notable for $\rho = 0.4$ because when $\rho$ is small, the new parameter is close to the old $\theta_t$. Thus, even though we apply control according to the old $\theta_t$, we don't incur too much regret initially. 

%\section{Discussion}
%\label{sec:diss}
\section{Conclusion}
\label{sec:conclusion}

In this paper, we have proposed a Thompson sampling with dynamic episodes (TSDE) learning algorithm for control of stochastic linear systems with quadratic costs. Under some conditions on the prior distribution, we provide a  $\tilde O(\sqrt{T})$ bound on expected regret of TSDE. This implies that the average regret per unit time goes to zero, thus the learning algorithm asymptotically learns the optimal control policy. We believe this is the first near-optimal guarantee on expected regret for a learning algorithm in LQ control. We have also shown that TSDE with a re-initialization schedule (i.e. the TSDE-TV algorithm) is robust to time-varying drift in model parameters. As long as the drift is not large, the algorithm can ``track" the model parameters and find an approximately optimal control law. Numerical simulations confirm that TSDE indeed achieves sublinear regret which matches with the theoretical upper bounds. In addition to use of the Thompson sampling-based learning, the key novelty here is design of an exploration schedule  that achieves sublinear regret.

%%%%%%%%%%%%%%%%%%%%%%%%%%%%%%%%%%%%%%%%%%%%%%%%%%%%%%%%%%%%%%%%%%%%%%%%%%%%%%%%
%\section*{Acknowledgment}

%\renewcommand{\bibsection}{\section{References}}
\bibliographystyle{ieeetr}
\bibliography{IEEEabrv,ref_learning}

%\newpage

\appendix
\begin{proof}[Proof of Lemma \ref{lm:bound_X}]
During the $k$th episode, we have $u_t = G(\tilde{\theta}_k)x_t$. Then,
\begin{align}
||x_{t+1}|| = & ||(A_1+B_1G(\tilde{\theta}_k))x_t + w_t||
\notag\\
\leq &||(A_1+B_1G(\tilde{\theta}_k))x_t|| + ||w_t||
\notag\\
\leq &\rho(A_1+B_1G(\tilde{\theta}_k))||x_t|| + ||w_t||
\notag\\
\leq & \delta ||x_t|| + ||w_t||
\label{eq:boundxnext}
\end{align}
where the second inequality is the property of spectral radius, and the last inequality follows from Assumption \ref{assum:stabilizability}.
Iteratively applying \eqref{eq:boundxnext}, we get
\begin{align}
\lVert x_{t} \Vert 
\leq &\sum_{\tau < t} \delta^{t-\tau-1} \lVert w_\tau\rVert
\notag
\leq & \sum_{\tau < t} \delta^{t-\tau-1} \max_{\tau \leq T}\lVert w_\tau\rVert
\notag\\
\leq & \frac{1}{1-\delta} \max_{\tau \leq T}\lVert w_\tau\rVert.
\end{align}
Therefore,
\begin{align}
X_T^j
\leq &\Big(\frac{1}{1-\delta} \max_{t \leq T}\lVert w_t\rVert \Big)^j
\notag\\
= & (1-\delta)^{-j} \max_{t \leq T}\lVert w_t\rVert^j.
\label{eq:boundxfromw}
\end{align}
Then, it remains to bound $\ee[\max_{t \leq T}\lVert w_t\rVert^j]$. Following the steps of \cite{Gautam_BoundforMax}, we have
\begin{align}
 \exp\Big(\ee[\max_{t \leq T}\lVert w_t\rVert^j]\Big)
\leq & 
\ee\Big[\exp\Big(\max_{t \leq T}\lVert w_t\rVert^j\Big)\Big]
\notag\\
= & \ee\Big[\max_{t \leq T}\exp\Big(\lVert w_t\rVert^j\Big)\Big]
\notag\\
\leq & \ee\Big[\sum_{t \leq T}\exp\Big(\lVert w_t\rVert^j\Big)\Big]
\notag\\
= & T \ee\Big[\exp\Big(\lVert w_1\rVert^j\Big)\Big].
\label{eq:boundforw}
\end{align}
Combining \eqref{eq:boundxfromw} and \eqref{eq:boundforw}, we obtain
\begin{align}
\ee[X_T^j]
\leq &(1-\delta)^{-j}\log\Big(T \ee\Big[\exp\Big(\lVert w_1\rVert^j\Big)\Big]\Big)
\notag\\
= &O\Big((1-\delta)^{-j}\log(T)\Big).
\end{align}

\end{proof}

\begin{proof}[Proof of Lemma \ref{lm:PS_expectation}]
From the definition \eqref{eq:tk}, the start time $t_k$ is a stopping-time, i.e. $t_k$ is $\sigma(h_{t_k})-$measurable.
Note that $\tilde{\theta}_{k}$ is randomly sampled from the posterior distribution $\mu_{t_k}$.
Since ${t_k}$ is a stopping time, ${t_k}$ and $\mu_{t_k}$ are both measurable with respect to $\sigma(h_{t_k})$.
From the condition, $X$ is also measurable with respect to $\sigma(h_{t_k})$.
Then, conditioned on $h_{t_k}$, the only randomness in $f(\tilde{\theta}_{k},X)$ is the random sampling in the algorithm. This gives the following equation:
\begin{align}
\ee\Big[ f(\tilde{\theta}_{k},X)|h_{t_k} \Big]
= &\ee\Big[ f(\tilde{\theta}_{k},X)|h_{t_k},{t_k},\mu_{t_k} \Big] 
\notag\\
= & \int f(\theta,X)\mu_{t_k}(d \theta)
=\ee\Big[ f(\theta_1,X)|h_{t_k} \Big]
\label{eq:lm1L}
\end{align}
since $\mu_{t_k}$ is the posterior distribution of $\theta_1$ given $h_{t_k}$.
Now the result follows by taking the expectation of both sides.
\end{proof}

\begin{proof}[Proof of Lemma \ref{lm:KT}]
Define macro-episodes with start times $t_{n_i}, i=1,2,\dots$ where $t_{n_1} = t_1$ and 
\begin{align*}
t_{n_{i+1}} = \min\{ & t_k > t_{n_i}:\quad 
%\sigma_{t_k}^i < 0.5\sigma^i_{t_{k-1}} \text{ for some }i
\det(\Sigma_{t_k}) < 0.5\det(\Sigma_{t_{k-1}}) \}.
\end{align*}
The idea is that each macro-episode starts when the second stopping criterion happens.
Let $M$ be the number of macro-episodes until time $T$
and define $n_{(M+1)} = K_T+1$.
Let $\mathcal M$ be the set of episodes that is the first one in a macro-episode.

Let $\tilde T_i = \sum_{k=n_{i}}^{n_{i+1}-1} T_{k}$ be the length of the $i$th macro-episode.
By  definition of macro-episodes, any episode except the last one in a macro-episode must be triggered by the first stopping criterion. Therefore, within the $i$th macro-episode, $T_{k} = T_{k-1} +1$ for all $k = n_{i}, n_{i}+1,\dots,n_{i+1}-2$.
Hence,
\begin{align*}
\tilde T_i = &\sum_{k=n_{i}}^{n_{i+1}-1} T_{k}
= \sum_{j=1}^{n_{i+1}-n_i-1} (T_{n_i-1}+j) + T_{n_{i+1}-1}
\notag\\ 
\geq &\sum_{j=1}^{n_{i+1}-n_i-1} (j+1) + 1 = 0.5(n_{i+1}-n_i)(n_{i+1}-n_i+1).
\end{align*}
Consequently, $n_{i+1} - n_{i} \leq \sqrt{2 \tilde T_i}$ for all $i=1,\dots,M$.
From this property, we obtain
\begin{align}
K_T = &n_{M+1}-1
 = \sum_{i=1}^{M} (n_{i+1} - n_{i})
 \leq \sum_{i=1}^{M} \sqrt{2 \tilde T_i} .
 \label{eq:KTboundinpf}
 \end{align}
Using \eqref{eq:KTboundinpf} and the fact that $\sum_{i=1}^M \tilde T_i = T$, we get
\begin{align}
K_T \leq \sum_{i=1}^{M} \sqrt{2 \tilde T_i} \leq & 
\sqrt{ M\sum_{i=1}^{M} 2 \tilde T_i } 
= \sqrt{ 2 MT}
\label{eq:KTboundfromM}
\end{align}
where the second inequality is by Cauchy-Schwarz. 

%Define $V_t = \Sigma_t^{-1}$, so 
%\begin{align}
%\tr(V_T) = \sum_{i=1}^n (\sigma^i_T)^{-1}.
%\end{align}
%Since the second stopping criterion is triggered whenever any sample variance is half, we have
Since the second stopping criterion is triggered whenever the determinant of sample covariance is half, we have
\begin{align}
&\det(\Sigma^{-1}_{T}) \geq \det(\Sigma^{-1}_{t_{n_M}}) > 2\det(\Sigma^{-1}_{t_{N_M-1}}) 
\notag\\
&>\dots> 2^{M-1}\det(\Sigma^{-1}_{1})
%(\sigma^i_{t_k})^{-1} > 2(\sigma^i_{t_{k-1}})^{-1}
\end{align}
Since $(\tr(\Sigma_T^{-1}))^d \geq \det(\Sigma_T^{-1})$, we have
\begin{align}
&\tr(\Sigma_T^{-1}) > (\det(\Sigma_T^{-1}))^{1/d} 
\notag\\
> &2^{(M-1)/d} (\det(\Sigma^{-1}_{1}))^{1/d}
\geq 2^{(M-1)/d} \lambda_{min}
\end{align}
where $\lambda_{min}$ is the minimum eigenvalue of $\Sigma^{-1}_1$.
%Let $M^i$ be the number of times the second stopping criterion is triggered due to the $i$th component, then
%\begin{align}
%(\sigma^i_T)^{-1} > 2^{M^i}(\sigma^i_1)^{-1}
%\end{align}
%Therefore, 
%\begin{align}
%&\tr(\Sigma_T^{-1})= \sum_{i=1}^d (\sigma^i_T)^{-1} > \sum_i 2^{M^i}(\sigma^i_1)^{-1}
%\end{align}
Note that from Remark \ref{rm:invcov},
\begin{align}
\Sigma_T^{-1} = \Sigma_1^{-1} + \sum_{t=1}^{T-1} z_t z_t^\top
\end{align}
and we obtain
\begin{align}
2^{(M-1)/d} \lambda_{min} <\tr(\Sigma_1^{-1}) + \sum_{t=1}^{T-1} z_t^\top z_t.
\end{align}
Then,  
\begin{align}
M \leq & 1+ d\log(\frac{1}{\lambda_{min}}(\tr(\Sigma_1^{-1}) + \sum_{t=1}^{T-1} z_t^\top z_t))
\notag\\
= &O\left( d\log(\sum_{t=1}^{T-1} z_t^\top z_t)\right).
\end{align}
%\begin{align}
%\sum_{i=1}^d 2^{M^i}(\sigma^i_1)^{-1}  \leq \tr(\Sigma_1^{-1}) + \sum_{t=1}^{T-1} z_t^\top z_t
%\end{align}
%Therefore, for each $i$ we have
%\begin{align}
%2^{M^i} \leq  O\left( \sigma^i_1\sum_{t=1}^{T-1} z_t^\top z_t\right)
%\end{align}
%Then
%\begin{align}
%M =& \sum_{i=1}^d M^i \leq O\left(\sum_{i=1}^d \log\Big(\sigma^i_1\sum_{t=1}^{T-1} z_t^\top z_t\Big)\right)
%\notag\\
%= &O\left(d \log\Big(\sum_{t=1}^{T-1} z_t^\top z_t\Big)\right)
%\end{align}
Note that, $||z_t|| = ||[I, G(\theta)^\top]^\top x_t|| \leq M_G ||x_t||$.
Consequently,
\begin{align}
M \leq& O\Big(d \log\Big(M_G^2 \sum_{t=1}^{T-1}||x_t||^2\Big) \Big)
=O\Big(d \log\Big( \sum_{t=1}^{T-1}||x_t||^2\Big) \Big)
\notag\\
\leq &O\Big(d \log(TX_T^2) \Big)
\end{align}
Hence, from \eqref{eq:KTboundfromM} we obtain the claim of the lemma.
\end{proof}

\begin{lemma}
\label{lm:ineq_sigmatheta}
We have the following inequality:
\begin{align}
\ee \Big[\sum_{k=1}^{K_T}\sum_{t=t_k}^{t_{k+1}-1}  ||\Sigma^{-0.5}_t(\theta_1-\tilde{\theta}_k )||^2\Big]
\leq &  4dn (T + \ee[K_T]).
\end{align}
\end{lemma}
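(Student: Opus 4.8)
The plan is to decouple the inner sum over time within each episode from the Thompson-sampling expectation, and then aggregate over episodes. Within episode $k$ the sampled parameter $\tilde\theta_k$ is frozen while only $\Sigma_t$ varies, so I would like to replace every $\Sigma_t$ by the episode-start value $\Sigma_{t_k}$ at the cost of a constant factor. This is exactly what the second (doubling) stopping criterion buys: for every $t$ with $t_k\le t<t_{k+1}$ the while-loop condition was still satisfied, so $\det(\Sigma_t)\ge \tfrac12\det(\Sigma_{t_k})$, equivalently $\det(\Sigma_t^{-1})\le 2\det(\Sigma_{t_k}^{-1})$.

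The key technical step is to upgrade this scalar determinant bound to a Loewner-order bound $\Sigma_t^{-1}\preceq 2\,\Sigma_{t_k}^{-1}$. First, from Remark~\ref{rm:invcov} the inverse covariance is nondecreasing, $\Sigma_{t+1}^{-1}=\Sigma_t^{-1}+z_tz_t^\top\succeq\Sigma_t^{-1}$, so $\Sigma_t^{-1}\succeq\Sigma_{t_k}^{-1}$ throughout the episode. Writing $M=\Sigma_{t_k}^{1/2}\Sigma_t^{-1}\Sigma_{t_k}^{1/2}-I\succeq 0$, the determinant condition reads $\det(I+M)\le 2$, and since $M\succeq 0$ its eigenvalues $\mu_i\ge 0$ satisfy $1+\max_i\mu_i\le\prod_i(1+\mu_i)\le 2$; hence $M\preceq I$ and $\Sigma_t^{-1}\preceq 2\Sigma_{t_k}^{-1}$. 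Applying this columnwise to $\theta_1(i)-\tilde\theta_k(i)$ gives $\|\Sigma_t^{-0.5}(\theta_1-\tilde\theta_k)\|^2\le 2\|\Sigma_{t_k}^{-0.5}(\theta_1-\tilde\theta_k)\|^2$, so the inner double sum over the episode is at most $2T_k\|\Sigma_{t_k}^{-0.5}(\theta_1-\tilde\theta_k)\|^2$.

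Next I would take expectations episode by episode. Since $T_k$ is not $\sigma(h_{t_k})$-measurable but the first stopping criterion guarantees $T_k\le T_{k-1}+1$, with $T_{k-1}+1$ being $\sigma(h_{t_k})$-measurable, I replace $T_k$ by $T_{k-1}+1$. Conditioned on $h_{t_k}$, the sampled $\tilde\theta_k$ and the true $\theta_1$ are independent draws from the same posterior $\mu_{t_k}$ (the content underlying Lemma~\ref{lm:PS_expectation}), so each column of $\theta_1-\tilde\theta_k$ has conditional mean zero and conditional covariance $2\,\mathrm{Cov}(\theta_1(i)\mid h_{t_k})$. Therefore $\ee[\|\Sigma_{t_k}^{-0.5}(\theta_1-\tilde\theta_k)\|^2\mid h_{t_k}]=2\sum_{i=1}^n\tr(\Sigma_{t_k}^{-1}\mathrm{Cov}(\theta_1(i)\mid h_{t_k}))\le 2nd$, where the inequality uses that the posterior covariance of each column is dominated by $\Sigma_{t_k}$ (an equality for the un-truncated Gaussian, and the projection onto $\Omega_1$ does not increase it). Combining, the expectation of the episode-$k$ double sum is at most $4dn\,\ee[T_{k-1}+1]$.

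Finally I would sum over episodes. As in the proof of Lemma~\ref{lm:R0}, I rewrite $\sum_{k=1}^{K_T}$ as $\sum_{k=1}^\infty \ind_{\{t_k\le T\}}$, invoke monotone convergence and the $\sigma(h_{t_k})$-measurability of $\ind_{\{t_k\le T\}}(T_{k-1}+1)$ to justify the conditioning, and then use $\sum_{k=1}^{K_T}T_{k-1}\le T$ to get $\ee[\sum_{k=1}^{K_T}(T_{k-1}+1)]\le T+\ee[K_T]$, which yields the claimed $4dn(T+\ee[K_T])$. I expect the main obstacle to be the second paragraph: converting the determinant doubling condition into the uniform Loewner bound $\Sigma_t^{-1}\preceq 2\Sigma_{t_k}^{-1}$, since this is what makes the per-step terms reducible to a single episode-start evaluation. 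The secondary care point is the covariance-domination claim for the projected posterior that feeds the constant $2nd$.
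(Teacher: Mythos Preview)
Your proposal is correct and follows the same overall structure as the paper's proof: (i) replace $\Sigma_t^{-1}$ by $\Sigma_{t_k}^{-1}$ at the cost of a factor $2$ via the doubling criterion, (ii) use $T_k\le T_{k-1}+1$ to get a $\sigma(h_{t_k})$-measurable multiplier, (iii) compute $\ee[\|\Sigma_{t_k}^{-0.5}(\theta_1-\tilde\theta_k)\|^2\mid h_{t_k}]\le 2dn$ from the posterior-sampling property, (iv) sum $\sum_{k}(T_{k-1}+1)\le T+K_T$.

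The only substantive difference is in step (i). The paper invokes an external result (Lemma~9 of \cite{abbasi2015bayesian}) that bounds the quadratic-form ratio directly by the determinant ratio, $\|\Sigma_t^{-0.5}v\|^2\le \tfrac{\det(\Sigma_{t_k})}{\det(\Sigma_t)}\|\Sigma_{t_k}^{-0.5}v\|^2$, and then uses $\det(\Sigma_{t_k})/\det(\Sigma_t)\le 2$. You instead upgrade the determinant condition to the Loewner bound $\Sigma_t^{-1}\preceq 2\Sigma_{t_k}^{-1}$ via the elementary eigenvalue argument $\prod_i(1+\mu_i)\le 2\Rightarrow \max_i\mu_i\le 1$; this is self-contained and actually a slightly stronger conclusion (operator inequality rather than per-vector ratio), though for this application the two are equivalent. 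Your approach removes the external citation at no real cost. Your explicit acknowledgment that the projection onto $\Omega_1$ can only shrink the posterior covariance is also more careful than the paper, which simply asserts that the columns are $N(0,I)$ after whitening.
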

\begin{proof}
From Lemma 9 of \cite{abbasi2015bayesian}, we have
\begin{align}
&||\Sigma^{-0.5}_t (\theta_1 -\tilde{\theta}_k)||^2
\notag\\
\leq &
||\Sigma^{-0.5}_{t_k} (\theta_1 -\tilde{\theta}_k)||^2\frac{\det(\Sigma_{t_k})}{\det(\Sigma_{t})}
\notag\\
\leq & 2
||\Sigma^{-0.5}_{t_k} (\theta_1 -\tilde{\theta}_k )||^2
\end{align}
where the last inequality follows from the second stopping criterion of the algorithm.
Therefore, 
\begin{align}
&\sum_{k=1}^{K_T}\sum_{t=t_k}^{t_{k+1}-1}  ||\Sigma^{-0.5}_t(\theta_1-\tilde{\theta}_k )||^2
\notag\\
\leq 
&2 \sum_{k=1}^{K_T} T_k ||\Sigma^{-0.5}_{t_k} (\theta_1 -\tilde{\theta}_k)||^2.
\end{align}
Using the idea of the proof of Lemma \ref{lm:R0}, we obtain
\begin{align}
& \ee \Big[\sum_{k=1}^{K_T} T_k ||\Sigma^{-0.5}_{t_k} (\theta_1-\tilde{\theta}_k)||^2\Big]
\notag\\
= & \sum_{k=1}^{\infty}\ee \Big[  \mathds{1}_{\{t_{k}\leq T\}}T_k ||\Sigma^{-0.5}_{t_k} (\theta_1 -\tilde{\theta}_k)||^2\Big]
\notag\\
\leq &\sum_{k=1}^{\infty}\ee \Big[\mathds{1}_{\{t_{k}\leq T\}} (T_{k-1}+1) ||\Sigma^{-0.5}_{t_k} (\theta_1 -\tilde{\theta}_k)||^2\Big].
\end{align}
Since $\mathds{1}_{\{t_{k}\leq T\}} (T_{k-1}+1)$ is measurable with respect to $\sigma(h_{t_k})$, we get
\begin{align}
& \ee \Big[\mathds{1}_{\{t_{k}\leq T\}} (T_{k-1}+1) ||\Sigma^{-0.5}_{t_k} (\theta_1-\tilde{\theta}_k)||^2\Big]
\notag\\
=&\ee \Big[ \ee \Big[\mathds{1}_{\{t_{k}\leq T\}} (T_{k-1}+1) ||\Sigma^{-0.5}_{t_k} (\theta_1 -\tilde{\theta}_k)||^2|h_{t_k}\Big]\Big]
\notag\\
=&\ee \Big[\mathds{1}_{\{t_{k}\leq T\}} (T_{k-1}+1)  \ee \Big[||\Sigma^{-0.5}_{t_k} (\theta_1 -\tilde{\theta}_k)||^2|h_{t_k}\Big]\Big]
\notag\\
\leq &\ee \Big[\mathds{1}_{\{t_{k}\leq T\}} (T_{k-1}+1) 2dn\Big]
\end{align}
where the inequality holds because conditioned on $h_{t_k}$, each column of $\Sigma^{-0.5}_{t_k} (\theta_1 -\tilde{\theta}_k)$ is the difference of two $d$-dimensional i.i.d. random vectors $\sim N(0,I)$.

As a result,
\begin{align}
& \ee \Big[\sum_{k=1}^{K_T}\sum_{t=t_k}^{t_{k+1}-1}  ||\Sigma^{-0.5}_t(\theta_1-\tilde{\theta}_k )||^2\Big]
\notag\\
\leq & 4dn\ee \Big[\mathds{1}_{\{t_{k}\leq T\}} (T_{k-1}+1)\Big]
\notag\\
\leq & 4dn \ee[T + K_T].
\end{align}
\end{proof}

\begin{lemma}
\label{lm:bounds_some_expectations}
The following bounds hold:
\begin{align}
&\ee[\sqrt{\log(X_T)}]\leq  \tilde O ( 1 )
\\
&\ee[\sqrt{\log(X_T)}X_T^2]
\leq \tilde O\Big( (1-\delta)^{-2}\Big),
\\
&\ee[X_T^4\log(X_T)]
\leq  \tilde O\Big((1-\delta)^{-5}  \Big).
\end{align}
\end{lemma}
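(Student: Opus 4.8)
The plan is to reduce all three estimates to the polynomial moment bound $\ee[X_T^j] \leq O(\log(T)(1-\delta)^{-j})$ of Lemma \ref{lm:bound_X}, handling the logarithmic factors $\log(X_T)$ separately from the powers of $X_T$ by elementary convexity inequalities. The only genuinely new ingredient is a bound on $\ee[\log(X_T)]$, from which the first two statements follow at once.

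First I would establish the \emph{auxiliary estimate} $\ee[\log(X_T)] = \tilde O(1)$. Since $\log(\cdot)$ is concave, Jensen's inequality gives $\ee[\log(X_T)] \leq \log(\ee[X_T])$, and Lemma \ref{lm:bound_X} with $j=1$ yields $\ee[X_T] = \tilde O((1-\delta)^{-1})$; taking the logarithm turns the polynomial factor $(1-\delta)^{-1}$ into the lower-order term $\log((1-\delta)^{-1})$, which is absorbed into $\tilde O(\cdot)$. Hence $\ee[\log(X_T)] = \tilde O(1)$.

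For the first bound, concavity of $\sqrt{\cdot}$ and Jensen give $\ee[\sqrt{\log(X_T)}] \leq \sqrt{\ee[\log(X_T)]} = \tilde O(1)$. For the second bound I would apply the Cauchy–Schwarz inequality to split the two factors, $\ee[\sqrt{\log(X_T)}\,X_T^2] \leq \sqrt{\ee[\log(X_T)]}\,\sqrt{\ee[X_T^4]}$, and then combine the auxiliary estimate with Lemma \ref{lm:bound_X} at $j=4$, which gives $\sqrt{\ee[X_T^4]} = \tilde O((1-\delta)^{-2})$; the product is $\tilde O((1-\delta)^{-2})$, as claimed. For the third bound, rather than splitting, I would use the elementary inequality $\log(x) \leq x$ to write $X_T^4\log(X_T) \leq X_T^5$, so that Lemma \ref{lm:bound_X} at $j=5$ directly yields $\ee[X_T^4\log(X_T)] \leq \ee[X_T^5] = \tilde O((1-\delta)^{-5})$.

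The main obstacle is the bookkeeping around the sign and domain of $\log(X_T)$: the quantity $\sqrt{\log(X_T)}$ is real only when $X_T \geq 1$, and Jensen for the concave maps $\log$ and $\sqrt{\cdot}$ must be applied where they are genuinely concave. I would resolve this by adopting the convention (implicit in the surrounding analysis, where these $\log$-terms are always multiplied by powers of $X_T$) that $\log(X_T)$ is read as $\log(\max(X_T,e))$, equivalently $(\log X_T)_+$; on the event $X_T<1$ every integrand is then controlled by a constant and contributes only an additive $O(1)$. The one remaining design choice — Cauchy–Schwarz for the second bound but the cruder $\log x \le x$ for the third — is forced by the target exponents: splitting is needed to keep the second bound at $(1-\delta)^{-2}$, whereas for the third the slack between the $(1-\delta)^{-4}$ produced by splitting and the stated $(1-\delta)^{-5}$ means either route succeeds, so the shortest one, $\log x \le x$, suffices.
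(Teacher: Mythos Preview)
Your proposal is correct and follows essentially the same approach as the paper's own proof: Jensen for the concave maps $\sqrt{\cdot}$ and $\log(\cdot)$ to handle the first bound, Cauchy--Schwarz followed by Jensen for the second, and the crude $\log x\le x$ plus Lemma~\ref{lm:bound_X} at $j=5$ for the third. Your discussion of the sign/domain issue for $\log(X_T)$ is a welcome addition, since the paper applies Jensen to $\sqrt{\cdot}$ and $\log(\cdot)$ without commenting on the event $\{X_T<1\}$.
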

\begin{proof}
Using Lemma \ref{lm:bound_X} on $X_T$, we get
\begin{align}
&\ee[\sqrt{\log(X_T)}]
\leq \sqrt{\ee[\log(X_T)]}
\leq \sqrt{\log(\ee[X_T])]}
\notag\\
\leq &O\Big(\sqrt{\log(\log(T)(1-\delta)^{-1})} \Big)
\notag\\
\leq &\tilde O(1).
\end{align}
Similarly,
\begin{align}
&\ee[\sqrt{\log(X_T)}X_T^2]
\leq \sqrt{\ee[\log(X_T)] \ee[X_T^4]}
\notag\\
\leq &\sqrt{ \log(\ee[X_T])]\ee[X_T^4]}
\notag\\
\leq &O\Big( (1-\delta)^{-2}\sqrt{ \log(T)\log(\log(T)(1-\delta)^{-1})} \Big)
\notag\\
\leq &\tilde O\Big( (1-\delta)^{-2} \Big).
\end{align}
Since $\log(X_T) \leq  X_T$, we have
\begin{align}
\ee[X_T^4\log(X_T)]
\leq &\ee[X_T^5]
\notag\\
\leq & O\Big(  \log(T)(1-\delta)^{-5} \Big)
\notag\\
\leq & \tilde O\Big( (1-\delta)^{-5}  \Big)
\end{align}
\end{proof}

\begin{IEEEbiographynophoto}{Yi Ouyang}(S'13-M'16)
received the B.S. degree in Electrical Engineering from the National Taiwan University, Taipei, Taiwan in 2009, and the M.S. degree and the Ph.D. degree in electrical engineering and computer science from the University of Michigan, Ann Arbor, MI, in 2012 and 2016, respectively. He was a Postdoctoral Scholar at the University of Southern California from 2016 to 2017. He is currently a Postdoctoral Scholar at the University of California, Berkeley.  His research interests include stochastic control, reinforcement learning, decentralized decision-making, and dynamic games with asymmetric information.
\end{IEEEbiographynophoto}

\begin{IEEEbiographynophoto}{Mukul Gagrani}
received the B.tech and M.tech degree in Electrical Engineering from the Indian Institute of Technology, Kanpur, India in 2013. 
He is currently a PhD candidate in the department of electrical engineering at the University of Southern California, Los Angeles, CA.
His research interests include decentralized stochastic control, stochastic scheduling and decision-making under uncertainty.
\end{IEEEbiographynophoto}

\begin{IEEEbiographynophoto}{Rahul Jain}
	is an associate professor and the K. C. Dahlberg Early Career Chair in the EE department at the University of Southern California. He received his PhD in EECS and an MA in Statistics from the University of California, Berkeley, his B.Tech from IIT Kanpur. He is winner of numerous awards including the NSF CAREER award, an IBM Faculty award and the ONR Young Investi gator award. His research interests span stochastic systems, statistical learning, queueing systems and game theory with applications in communication networks, power systems, transportation and healthcare.
\end{IEEEbiographynophoto}

\end{document}